\title{First Law of Black Hole Mechanics as a Condition for Stationarity}
\date{\today}
\author{Stephen\hspace{-1.5mm}\newlength{\Mheight}
\newlength{\cwidth}
\settoheight{\Mheight}{M}\settowidth{\cwidth}{c}M\parbox[b][\Mheight][t]{\cwidth}{c}\hspace{0.15mm}Cormick\footnote{stephen.mccormick@une.edu.au}\vspace{3mm}\\School of Science and Technology\\University of New England\\Armidale, 2351\\Australia}
\newtheorem{theorem}{Theorem}[section]
\newtheorem{proposition}[theorem]{Proposition}
\newtheorem{corollary}[theorem]{Corollary}
\newtheorem{lemma}[theorem]{Lemma}
\newtheorem{prop}[theorem]{Proposition}
\newtheorem{remark}[theorem]{Remark}
\newcommand{\onabla}{\mathring\nabla}
\newcommand{\og}{\mathring g}
\newcommand{\lie}[1]{\mathfrak{#1}}
\newcommand{\Jxi}{\tilde{J}_{\xi_{\tref}}}
\newcommand{\xiref}{\xi_{\tref}}
\newcommand{\st}{\hspace{1mm}|\hspace{1mm}}
\newcommand{\fd}{{}^4\hspace{-0.6mm}}
\newcommand{\gpi}{g,A,\pi,\varepsilon}
\newcommand{\Mo}{{\mathcal{M}_0}}
\newcommand{\hatH}{\hat{\mathcal{H}}}
\newcommand{\Pbundle}{\Lambda^0(\mathcal{M})\times T\mathcal{M}\times\lie{g}\otimes\Lambda^0(\mathcal{M})}
\numberwithin{equation}{section}
\DeclareMathOperator{\tr}{tr}
\DeclareMathOperator{\tref}{ref}
\DeclareMathOperator{\Area}{Ar}
\begin{document}
\maketitle
\begin{abstract}
In earlier work, we provided a Hilbert manifold structure for the phase space for the Einstein-Yang-Mills equations, and used this to prove a condition for initial data to be stationary \cite{Me}. Here we use the same phase space to consider the evolution of initial data exterior to some closed $2$-surface boundary, and establish a condition for stationarity in this case. It is shown that the differential relationship given in the first law of black hole mechanics is exactly the condition required for the initial data to be stationary; this was first argued non-rigorously by Sudarsky and Wald in 1992 \cite{SW1}. Furthermore, we give evidence to suggest that if this differential relationship holds then the boundary surface is the bifurcation surface of a bifurcate Killing horizon.

\end{abstract}

\section{Introduction}
In 1992, Sudarsky and Wald \cite {SW1} discussed the first law of black hole mechanics in the context of Einstein-Yang-Mills theory. Among other things, they noted that certain surface integrals, associated with the Hamiltonian, were closely related to the first law. From this, it was argued that the differential relationship given by the first law provides a condition for stationarity of the Einstein-Yang-Mills equations. This argument was based on earlier work by Brill, Deser and Fadeev \cite{BDF}, who proposed in the pure Einstein case, that stationary solutions were exactly those solutions that extremise the ADM mass over the space of solutions. Both arguments were based on Lagrange multipliers, however neither provided the mathematical machinery required to make such an argument rigorous. The essential missing ingredient, to develop this argument into a mathematical proof, is a manifold structure for the space of solutions.

In 2005, Bartnik \cite{phasespace} provided such a Hilbert manifold structure for the Einstein case, and from this a complete proof of the Brill, Deser and Fadeev argument was given. At first, this may appear to contradict the argument of Sudarsky and Wald, since we have that a solution is stationary if and only if it is a critical point of the mass. However, the case considered by Bartnik has no Maxwell or Yang-Mills fields, and the initial data manifold has a single asymptotic end with no interior boundary; in this case, the first law simply reduces to $dm=0$. Recently, using similar ideas, the Einstein-Yang-Mills case has been considered by the author \cite{Me}. As this article builds on many things in \cite{Me}, we will refer to this throughout as Paper I. In Paper I, a suitable phase space for the Einstein-Yang-Mills equations is outlined, and the condition for stationarity in this case becomes
\begin{equation}
dm+V_\infty\cdot dQ_\infty=0,\label{firstlawa}
\end{equation}
where $V_\infty$ is the electric potential at infinity and $Q_\infty$ is the total electric charge. Again, (\ref{firstlawa}) is the appropriate first law in this case.

In this article, we consider evolution exterior to some closed $2$-surface boundary, and conclude that the condition for stationarity is again the appropriate version of the first law; namely,
\begin{equation}
d m=\frac{\kappa}{8\pi}dA + \Omega d J + V\cdot d Q-V_\infty\cdot d Q_\infty,
\end{equation}
where $A$ is the area, $\Omega$ is the angular velocity, $J$ is the angular momentum, $V$ is the electric potential and $Q$ is the electric charge, of the boundary surface respectively. Note that the term, $V_\infty\cdot d Q_\infty$, not generally included in the first law, permits non-zero electric potential at infinity. In the Maxwell electrovacuum case, $Q=Q_\infty$, so the expression $(V\cdot d Q-V_\infty\cdot d Q_\infty)$ is equivalent to $\tilde{V}dQ$, where $\tilde{V}=V-V_\infty$ is the potential difference between the boundary surface and infinity.

An initial data set for the Einstein-Yang-Mills equations is a tuple $(g,A,\pi,\varepsilon)$; a Riemannian metric, a Lie algebra-valued one form, a symmetric covariant 2-tensor density and a Lie coalgebra-valued vector density on a 3-manifold, $\mathcal{M}$. Here $\pi$ is the usual momentum conjugate to $g$, $A$ is the gauge field projected onto $\mathcal{M}$ and $\varepsilon$ is its associated momentum, equal to negative four times the Yang-Mills electric field density, $E$. We also make use of the quantity 
\begin{equation*}
B^i_a:=\frac{1}{2}\epsilon^{ijk}(\nabla_j A_{ak}-\nabla_k A_{aj} +C_{abc}A^b_j A^c_k),\label{Bdefn}
\end{equation*}
the Yang-Mills magnetic field density, where $\epsilon^{ijk}$ is the usual antisymmetric tensor density and $C_{abc}$ are the structure constants of the Lie algebra, and the indices $a,b,c...$ are Lie algebra indices (see (\ref{indextable})). Note that the connection in (\ref{Bdefn}) can be replaced by any torsion-free connection, due to the antisymmetry in $\nabla A$. Throughout, we consider the electric and magnetic fields as viewed by a Gaussian normal set of observers; that is, observers whose worldlines are orthonormal to the Cauchy surface.

The Hilbert manifold structure from Paper I, considered here, consists of initial data sets $(g,A,\pi,\varepsilon)$ with local regularity $H^2\times H^2\times H^1 \times H^1$ and appropriate decay for an asymptotic flatness. It is interesting to note that this is exactly the regularity required by recent work of Klainerman, Rodnianski and Szeftel \cite {boundedl2} to ensure that the Cauchy problem for the Einstein equations is well-posed. Furthermore, the regularity assumptions on the Yang-Mills initial data is exactly that required to ensure that the Cauchy for the Yang-Mills equations on a curved background is well-posed, which was recently demonstrated by Ghanem \cite{ghanem}. To the best of the author's knowledge the Cauchy problem for the coupled system has not been considered at this regularity, however given that each system is well-posed independently, one expects the coupled system to also be well-posed.

The outline of this article is as follows. In Section \ref{Ssetup}, we recall the phase space and constraint submanifold from Paper I. Section \ref{Smass} introduces the mass, charge and angular momentum definitions, and establishes some properties of these quantities as functions on the phase space. Finally, in Section \ref{Shamiltonian}, we discuss Hamiltonians and use a Lagrange multiplier argument to establish the condition for stationarity.

\section{The Phase Space}\label{Ssetup}
Let $\mathcal{M}$ be a complete, paracompact, connected, oriented 3-manifold, which is asymptotically flat in the following sense: There exists a compact set, $K\subset\mathcal{M}$, such that ${\mathcal{M}\setminus K=\bigcup_{i=1}^N M_i}$, with each $M_i$ diffeomorphic to $\mathbb{R}^3$ minus the closed unit ball; there exists a collection of diffeomorphisms $\phi_i:M_i\rightarrow \mathbb{R}^3\setminus \overline{B_1(0)}$. On $\mathcal{M}$, fix a smooth background metric $\og$ such that $\og\equiv\phi_{i*}(g_{\mathbb{R}^3})$ on each $M_i$, the pullback of the Euclidean metric. Further, define a smooth function $r(x)\geq1$ on $\mathcal{M}$, such that $r(x)=|\phi_i(x)|$ on each $M_i$.

Next, recall the weighted Lebesgue and Sobolev spaces, which describe the phase space. The spaces $L^p_\delta(\mathcal{M})$ and $W^{k,p}_\delta(\mathcal{M})$ are defined respectively as the completion of $C^\infty_c(\mathcal{M})$ with respect to the norms
\begin{equation}
\|u\|_{p,\delta}:=\left(\int_\mathcal{M}|u|^p r^{-\delta p -n}\mathring{d\mu}\right)^{1/p},\hspace{9mm}\left\|u\right\|_{k,p,\delta}:=\sum_{j=0}^k\|\onabla^j u\|_{p,\delta-j}.
\end{equation}
We use $\circ$ to denote quantities determined by $\og$, such as the background Levi-Civita connection, $\onabla$, and measure, $\mathring{d\mu}=\sqrt{\og}dx^3$. Weighted Lebesgue and Sobolev spaces of sections of bundles are defined in the usual way. These weighted spaces have the same local regularity as the usual Lebesgue and Sobolev spaces and behave as $o(r^{\delta})$ near infinity on each of the ends, with each each successive derivative decaying one power of $r$ faster. Refer to \cite{AF, ellipticsys, NW} for details on the weighted spaces.

The Yang-Mills gauge group is taken to be a compact Lie group, $G$, with Lie algebra, $\lie{g}$. We identify $\lie{g}$ with its Lie coalgebra, $\lie{g}^*$, via a positive definite inner product, $\gamma$, which may be taken to be the negative of the Killing form on the semisimple factor and the usual Euclidean inner product on the abelian factor. The usual decay conditions for asymptotic flatness and the regularity assumptions mentioned above suggest that we impose $(g-\og)\in W^{2,2}_{-1/2}$ and $\pi\in W^{1,2}_{-3/2}$, noting that $\pi$ behaves like a derivative of the metric. Imposing $\varepsilon\in W^{1,2}_{-3/2}$ enforces the usual $\frac{1}{r^2}$ fall off of the electric field in electromagnetism, however the appropriate domain for $A$ is less obvious. The Lie algebra, $\lie{g}$, is split into its centre, $\lie{z}$, and a $\gamma$-orthogonal subspace, $\lie{k}$. Then $A$ is decomposed into $A=A_\lie{z}+A_\lie{k}$, with $A_\lie{z}$ valued in $\lie{z}$ and $A_\lie{k}$ valued in $\lie{k}$. The domain for $A$ is taken to be such that $A_\lie{z}\in W^{2,2}_{-1/2}$ and $A_\lie{k} \in W^{2,2}_{-3/2}$.

The decay conditions on $A$ are chosen such that the gauge covariant derivative, ${\hat{D}:=\partial + [A,\cdot]\sim\partial +A_{\lie{k}}}$, behaves analogously to the usual covariant derivative at infinity; that is, $\hat{D}\theta=\partial\theta+o(r^{-3/2})\theta$. Although it may appear somewhat unnatural to require this condition for the analysis, such a condition is in fact required to ensure that the total charge is well-defined \cite{globalcharges}. It should be noted that this condition also puts the electric and magnetic fields on equal footing. In the language of physics, this condition is that the Yang-Mills fields are asymptotic to photon fields before vanishing.

Formally, the phase space from Paper I is given by
\begin{equation*}
\mathcal{F}:=\mathcal{G}^+\times\mathcal{K}\times\mathcal{A}\times\mathcal{E},
\end{equation*}
where
\begin{align*}
\mathcal{G}^+:&=\{g\hspace{1mm}|\hspace{1mm}(g-\og)\in W^{2,2}_{-1/2}(S_2),\hspace{2mm}g>0\},&
\mathcal{K}:&=W^{1,2}_{-3/2}(S^2\otimes\Lambda^3),\\
\mathcal{A}:&=W^{2,2}_{-1/2}(T^*\mathcal{M}\otimes \lie{z})\oplus W^{2,2}_{-3/2}(T^*\mathcal{M}\otimes \lie{k}),&
\mathcal{E}:&=W^{1,2}_{-3/2}(T\mathcal{M}\otimes \lie{g}^*\otimes\Lambda^3).
\end{align*}
In the above, $S_2$ and $S^2$ are the spaces of symmetric covariant and contravariant tensors on $\mathcal{M}$ respectively, and we denote by $\Lambda^k$, the bundle of $k$-forms on $\mathcal{M}$.

We also define the spaces
\begin{align*}
\mathcal{N}:&=L^2_{-1/2}(\Lambda^0\times T\mathcal{M}\times\mathfrak{g}\otimes\Lambda^0),\\
\mathcal{N}^*:&=L^2_{-5/2}(\Lambda^3\times T^*\mathcal{M}\otimes\Lambda^3\times \mathfrak{g}^*\otimes\Lambda^3).
\end{align*}
Throughout this article, we use the following conventions for indices on different spaces:
\begin{align}\label{indextable}
\begin{tabular}{|r||c    l|}
\hline
  $\mathcal{M}$, $\mathbb{R}^3$&Latin lower case, mid-alphabet &$i,j,...$\\\hline
  ${}^4\hspace{-0.6mm}\mathcal{M}$, $\mathbb{R}^{3,1}$ &Greek lower case, mid-alphabet  & $\mu,\nu...$  \\ \hline
  $\lie{g}$ &Latin lower case, early alphabet &$a,b...$    \\  \hline
  ${}^4\hspace{-0.6mm} P$, $(\mathbb{R}^{3,1}\oplus\lie{g})$ &Greek lower case, early alphabet &$\alpha,\beta...$     \\  \hline
\end{tabular},
\end{align}
where ${}^4\hspace{-0.6mm} \mathcal{M}$ is the spacetime in which $\mathcal{M}$ sits, and ${}^4\hspace{-0.6mm} P$ is a $G$-bundle over ${}^4\hspace{-0.6mm} \mathcal{M}$, which is associated with the Yang-Mills fields. By a slight abuse of notation, we will write ${\xi^\alpha=(\xi^0,\xi^i,\xi^a)=(\xi^\mu,\xi^a)}$ to indicate a $(4+n)$ dimensional object, and identify the components with appropriate projections. For example, if $\xi^\alpha$ is a section of $T{}^4\hspace{-0.6mm} P$, we consider $\xi^0$ to be a scalar function, $\xi^i$ to be a vector field over $\mathcal{M}$, and $\xi^a\in\lie{g}$.

Recall the constraint map, $\Phi:\mathcal{F}\rightarrow \mathcal{N}^*$, given by
\begin{alignat}{3}
\Phi_0(g,A,\pi,\varepsilon)&=(\frac{1}{2}(\pi^k_k)^2-\pi^{ij}\pi_{ij}-(\frac{1}{8}\varepsilon^k_a \varepsilon_k^a+2B^k_a B^a_k))g^{-1/2}+R\sqrt{g} \label{constraint1}, \\
\Phi_i(g,A,\pi,\varepsilon)&=2\nabla^j\pi_{ij}-\varepsilon^j_a(\onabla_i A^a_j-\onabla_j A^a_i)+\onabla_j(\varepsilon^j_a)A_i^a\label{constraint2},\\
\Phi_a(g,A,\pi,\varepsilon)&=-\onabla_j\varepsilon^j_a-C^c_{ab}A^b_j\varepsilon^j_c \label{gauss}.
\end{alignat}
The momentum constraint (\ref{constraint2}) differs from that considered in Paper I by the term $\Phi_a A_i^a$. This difference amounts to a difference in interpretation of the non-dynamical degree of freedom associated with $\Phi_a$. As this is simply the addition of another constraint, the results of Paper I clearly remain valid. Also note that in Paper I, $\mathcal{M}$ was considered to have only a single asymptotic end, however this was for simplicity of presentation rather than technical reasons. It is clear that the entire phase space analysis is valid for multiple asymptotic ends (the full analysis is presented in Chaper 4 of the author's doctoral thesis \cite{MyThesis}). In particular, for a given source, $s\in\mathcal{N}^*$, the level set
\begin{equation*}
\mathcal{C}(s):=\{(g,A,\pi,\varepsilon)\in\mathcal{F}\st\Phi(g,A,\pi,\varepsilon)=s\},
\end{equation*}
has a Hilbert manifold structure; we call this the constraint submanifold. We demonstrate that the energy-momentum and other quantities are not defined on all of $\mathcal{F}$, so we will view the energy, momentum, angular momentum and charge as functions on constraint submanifolds with integrable sources.

\section{Mass, Charge and Angular Momentum}\label{Smass}
In this Section, we discuss the quantities relevant to the first law; some of which are defined at a particular end, and others on some surface to later correspond to a horizon. In order to do this, an artificial boundary to one of the ends is introduced. Let $\Sigma$ be a closed 2-surface such that $\mathcal{M}\setminus\Sigma$ consists of two connected components; one of which contains only a single end, $M_0$. Denote by $\mathcal{M}_0$, the connected component of $\mathcal{M}\setminus\Sigma$ that contains $M_0$.

The ADM energy-momentum covector, ${\mathbb{P}_\mu(g,\pi)=(\mathbb{P}_0,\mathbb{P}_i)=(m_0,p_i)}$, is given by
\begin{align}
16\pi m_0&:=\oint_{S_\infty}\og^{jk}(\onabla_kg_{ij}-\onabla_i g_{jk})dS^i,\\
16\pi p_i&:=2\oint_{S_\infty}\pi_{ij}dS^j,
\end{align}
where $S_\infty$ is understood as the limit of increasingly large spheres. Throughout, the unit normal vector associated with the surface element $dS$ is to be understood as pointing in the direction of infinity in $M_0$. The $\lie{g}$-valued total Yang-Mills electric charge is given by
\begin{equation}
16\pi Q_{\infty\, a}:=4\oint_{S_\infty}E_a^idS_i=-\oint_{S_\infty}\varepsilon_a^idS_i,
\end{equation}
and we write $\mathbb{P}_a=Q_{\infty\, a}$, so that the tuple $\mathbb{P}_\alpha:=(\mathbb{P}_0,\mathbb{P}_i,\mathbb{P}_a)\in\mathbb{R}^{3,1}\oplus\lie{g}^*$ can be identified with the asymptotic value of a section of ${}^4\hspace{-0.6mm} P$. The charge, $Q_\Sigma$, associated with $\Sigma$, is defined analogously,
\begin{equation}
16\pi Q_{\Sigma\, a}:=4\oint_{\Sigma}E_a^idS_i=-\oint_{\Sigma}\varepsilon_a^idS_i.
\end{equation}

Let $\xi^\mu_\infty\in\mathbb{R}^{3+1}$ be identified with some timelike vector, corresponding to the tangent to the worldline of an observer at spatial infinity. Further, let $\xi^a_\infty\in\lie{g}$ correspond to the asymptotic value of the electric potential, which we will assume to be constant. A total measure of the energy, viewed by this observer, is then given by $\xi_\infty\cdot (E,p_i,Q_a)$, which will be more convenient to work with than the tuple, $(E,p_i,Q_a)$, itself. In order to write this as the integral of a divergence, we need to make sense of extending $\xi_\infty$ to a section of $T{}^4\hspace{-0.6mm} P\cong\Pbundle$.

Near infinity, $\xi_\infty\in \mathbb{R}^{3,1}\oplus\lie{g}$ may be identified with some smooth section, 
\begin{equation*}
\tilde{\xi}_\infty\in C^\infty(\Pbundle),
\end{equation*}
such that $\onabla \tilde{\xi}_\infty=0$. We then say a smooth section, $\hat{\xi}_\infty\in C^\infty(\Pbundle)$, is a constant translation near infinity representing $\xi_\infty$, if ${\hat{\xi}_\infty=\tilde{\xi}_\infty}$ on $E_{2\hat{R}}$ and vanishes on $B_{\hat{R}}$, for some $\hat{R}$, where $B_R:=\{x\in\mathcal{M}\st r(x)<R\}$ and $E_R:=\mathcal{M}_0\setminus \overline{B_R}$. While a representation of $\xi_\infty$ is not unique, the difference between two distinct representations is smooth and compactly supported. This lets us prescribe asymptotics for $\xi$, but we would also like to prescribe some boundary values on $\Sigma$; for this, fix a smooth section, $\hat{\xi}_\Sigma$, with support near $\Sigma$. We then define $\xi_{\tref}:=\hat{\xi}_\infty+\hat{\xi}_\Sigma$ to encapsulate both boundary conditions.

Define the spaces
\begin{align}
W^{2,2}_{\xi_{\tref}}:&=\big{\{} \xi \st (\xi-\xi_{\tref}) \in W^{2,2}_{-1/2 \, c}({\Lambda^0(\mathcal{M}_0)\times T\mathcal{M}_0\times\lie{g}\otimes\Lambda^0(\mathcal{M}_0)})\big{\}},\\
L^{2}_{\xi_\infty}:&=\big{\{} \xi \st (\xi-\hat{\xi}_\infty) \in L^{2}_{-1/2}(\Lambda^0(\mathcal{M}_0)\times T\mathcal{M}_0\times\lie{g}\otimes\Lambda^0(\mathcal{M}_0))\big{\}},
\end{align}
where $W^{2,2}_{-1/2 \, c}$ is the completion of $C^\infty_c$ with respect to the $W^{2,2}_{-1/2}$ norm. Elements of these spaces may be interpreted as sections of $\fd P$, restricted to $\mathcal{M}_0$, with prescribed asymptotics and boundary values on $\Sigma$.

Setting $\hat{\xi}^0_\Sigma\equiv 0$, we define the energy-momentum covector by its pairing with with a vector at infinity, as follows:
\begin{align}
16\pi\xi^0_\infty\mathbb{P}_0(g)=&\int_{\mathcal{M}_0}\Big{(}\hat{\xi}^0_\infty\og^{ik}\og^{jl}(\onabla_k\onabla_l g_{ij}-\onabla_i\onabla_k g_{jl})\\
&+\og^{ik}\og^{jl}\onabla_k\hat{\xi}^0_\infty(\onabla_l g_{ij}-\onabla_i g_{jl})\Big{)}\sqrt{\mathring{g}}\label{ADME},\\
16\pi\xi^i_\infty\mathbb{P}_i(\pi)=&\int_{\mathcal{M}_0}\left(2\xi^i_{\tref}\onabla_j\pi_i^j+2\pi^{ij}\onabla_i\xi_{\tref j}+\onabla_i(\varepsilon^i_a A^a_j) \xi_{\tref}^j+\varepsilon^i_a A^a_j\onabla_i \xi_{\tref}^j\right)\nonumber\\
&+\oint_\Sigma \left(2\xi_\Sigma^i\pi^j_i -\varepsilon^j_a A^a_i \xi_{\Sigma}^i\right)dS_j,\label{JPterms}
\end{align}
Note that while (\ref{JPterms}) contains the terms $(g,A,\varepsilon)$, the quantity $\mathbb{P}_i$ only depends only on $\pi$; the boundary terms on $\Sigma$ combine with the bulk integral to give a boundary integral at infinity, which removes the dependence on $g$ as $g=\og+o(r^{-1/2})$, and the Yang-Mills terms at infinity vanish (see (\ref{vanishYM})), leaving only $\pi$ dependence. When $\hat{\xi}^i_\Sigma$ agrees with a rotational Killing field, the integral over $\Sigma$ in (\ref{JPterms}) is proportional to the angular momentum. This leads us to define a generalised notion of angular momentum,
\begin{equation}
16\pi\tilde{J}_{\xi_{\tref}}(g,A,\pi,\varepsilon):=-\oint_\Sigma \left(2\hat{\xi}_\Sigma^i\pi^j_i -\varepsilon^j_a A^a_i \hat{\xi}_{\Sigma}^i\right)dS_j.\label{Jdef}
\end{equation}

Note that we follow the sign convention of Wald \cite{Wald}. The second term in ($\ref{Jdef}$), corresponding to the angular momentum of the Yang-Mills fields, is non-standard and appears to have been first considered by Sudarsky and Wald \cite{SW1}, however they considered the integration to be performed at infinity. It will be important for us to use a quasilocal\footnote{While this is useful for our purposes, we do not argue here that this gives a suitable quasilocal definition of angular momentum in general. There is a great deal of literature on the problem of quasilocal mass and angular momentum (see \cite{QLMreview} and references therein).} definition of angular momentum instead.

To write the electric charge as a bulk integral, we will fix a choice of the Lagrange multiplier, $\xiref^a$, with $\hat{\xi}_\Sigma=\xi_\Sigma\in\lie{g}$, constant. Similar to the above, we have
\begin{equation}
16\pi(\xi_\infty^a \mathbb{P}_a-\xi^a_\Sigma Q_{\Sigma\, a})=4\int_{\mathcal{M}_0}\left(\xiref^a\onabla_i E^i_a +E^i_a\onabla_i\xiref^a\right)\label{Qterms}.
\end{equation}

\begin{lemma}\label{LemQJ}
Let $\chi$ be a vector field on $\mathcal{M}$ with $\|\chi\|_{L^\infty(\Sigma)}<\infty$. The maps $Q_\Sigma:\mathcal{F}\rightarrow\lie{g}^*$ and $\tilde{J}_{\chi}:\mathcal{F}\rightarrow\mathbb{R}$ are smooth.
\end{lemma}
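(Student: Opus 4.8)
The plan is to exhibit both maps as compositions of bounded \emph{multilinear} maps with the bounded linear projections $\mathcal{F}\to\mathcal{K},\mathcal{A},\mathcal{E}$ and with trace (restriction) operators onto $\Sigma$; since a continuous multilinear map between Banach spaces is real-analytic, hence $C^\infty$, and since $\mathcal{F}$ is an open subset of a Hilbert space, smoothness will follow. The first step is to read off the algebraic structure of the two integrands. From $16\pi Q_{\Sigma\,a}=-\oint_\Sigma\varepsilon^i_a\,dS_i$, the charge is \emph{linear} in $\varepsilon$ alone, and because $\varepsilon$ is a vector density the pairing $\varepsilon^i_a\,dS_i$ is coordinate-invariant and involves no metric. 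From (\ref{Jdef}), $\tilde{J}_\chi$ splits into the term $-2\oint_\Sigma\chi^i\pi^j_i\,dS_j=-2\oint_\Sigma\chi^i g_{ik}\pi^{kj}\,dS_j$, which is \emph{bilinear} in $(g,\pi)$ once the mixed index of $\pi$ is written out through the metric, and the term $\oint_\Sigma\varepsilon^j_a A^a_i\chi^i\,dS_j$, which is \emph{bilinear} in $(\varepsilon,A)$ and again metric-free. Writing $g=\og+h$ with $h\in W^{2,2}_{-1/2}$ turns the first term into a fixed linear functional of $\pi$ plus a genuine bilinear form in $(h,\pi)$. Thus it suffices to prove that each of these surface pairings is a bounded (multi)linear form in the phase-space norms.

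The second step supplies the required boundedness via a trace estimate. I would fix a compact collar neighbourhood $U\supset\Sigma$ in $\mathcal{M}$; on $U$ the weight $r$ is bounded above and below, so each weighted norm restricts to an equivalent unweighted Sobolev norm, and the restriction maps $\mathcal{E},\mathcal{K}\to H^1(U)$ and $\mathcal{A}\to H^2(U)$ are bounded. Composing with the classical trace theorem gives bounded operators $H^1(U)\to H^{1/2}(\Sigma)\hookrightarrow L^2(\Sigma)$ and $H^2(U)\to H^{3/2}(\Sigma)\hookrightarrow C^0(\Sigma)$, the last embedding using that $\dim\Sigma=2<2\cdot\tfrac32$. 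Hence $\varepsilon|_\Sigma,\pi|_\Sigma\in L^2(\Sigma)$ and $g|_\Sigma,A|_\Sigma\in C^0(\Sigma)$, all with norms controlled by the phase-space norms. For $Q_\Sigma$ this immediately yields $\big|\oint_\Sigma\varepsilon^i_a\,dS_i\big|\le C\|\varepsilon|_\Sigma\|_{L^2(\Sigma)}\le C'\|\varepsilon\|_{\mathcal{E}}$, so $Q_\Sigma$ is a bounded linear map. For the two bilinear pieces of $\tilde{J}_\chi$ one estimates each integrand in $L^1(\Sigma)$ by pairing a $C^0$ factor against an $L^2$ factor, using $\|\chi\|_{L^\infty(\Sigma)}<\infty$ (which is exactly where the hypothesis on $\chi$ enters): for instance $\|\chi^i h_{ik}\pi^{kj}\|_{L^1(\Sigma)}\le\|\chi\|_{L^\infty(\Sigma)}\|h|_\Sigma\|_{C^0(\Sigma)}\|\pi|_\Sigma\|_{L^2(\Sigma)}|\Sigma|^{1/2}$, and similarly for the $(\varepsilon,A)$ term. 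This bounds each form by a product of phase-space norms, establishing continuity.

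The final step is the passage from continuity to smoothness: a continuous bilinear form $\mathcal{B}\colon X\times Y\to\mathbb{R}$ on Banach spaces has Fr\'echet derivative $D\mathcal{B}(x,y)(u,v)=\mathcal{B}(u,y)+\mathcal{B}(x,v)$, which is again continuous (multi)linear, and all higher derivatives are constant or vanish, so $\mathcal{B}\in C^\infty$; the same holds trivially for the linear functional defining $Q_\Sigma$. Pre-composing with the bounded linear projections of $\mathcal{F}$ onto its factors (and with the affine shift $g\mapsto\og+h$) preserves smoothness, and a finite sum of smooth maps is smooth, giving $Q_\Sigma\in C^\infty(\mathcal{F},\lie{g}^*)$ and $\tilde{J}_\chi\in C^\infty(\mathcal{F},\mathbb{R})$. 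I do not expect a serious obstacle here; the only points demanding care are the handling of the weighted norms near $\Sigma$ and the verification that the density pairings are metric-independent where claimed, so that $Q_\Sigma$ truly depends on $\varepsilon$ alone and the metric enters $\tilde{J}_\chi$ only through the single factor $g_{ik}$. Keeping the trace estimates on the fixed compact collar $U$, where weighted and unweighted norms are equivalent, neutralises the only place where the asymptotic weights could cause trouble.
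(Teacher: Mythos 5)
Your proof is correct and follows essentially the same route as the paper's: a Sobolev trace estimate (the paper uses a cutoff $\varphi\in C^\infty_c(\mathcal{M})$ with $\varphi\equiv 1$ on $\Sigma$ where you use a compact collar, which is the same device) bounds the surface integrals by products of phase-space norms, and smoothness then follows from boundedness together with (multi)linearity. Your explicit observation that the $\pi^j_i$ term is bilinear in $(g,\pi)$ via $g_{ik}\pi^{kj}$ is a minor refinement of the paper's statement that the maps are ``bounded and linear,'' but it does not change the argument.
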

\begin{proof}
By considering any function $\varphi\in C^\infty_c(\mathcal{M})$ with $\varphi\equiv 1$ on $\Sigma$, the Sobolev trace theorem gives
\begin{equation}
|Q_\Sigma|\leq c\|E\|_{L^1(\Sigma)}=\|\varphi E\|_{L^1(\Sigma)}\leq c\|\varphi E\|_{L^2(\Sigma)}\leq c\|E\|_{1,2,-3/2}.
\end{equation}
We estimate $\tilde{J}_{\chi}$ similarly:
\begin{align*}
\tilde{J}_{\chi}&\leq c(\|\chi\|_{L^2(\Sigma)}\|\pi\|_{1,2,-3/2}+\|\chi\|_{L^\infty(\Sigma)}\|\varphi A\|_{L^2(\Sigma)}\|\varphi \varepsilon\|_{L^2(\Sigma)})\\
&\leq c\|\chi\|_{L^\infty(\Sigma)}(\|\pi\|_{1,2,-3/2}+\|A\|_{1,2,-1/2}\|\varepsilon\|_{1,2,-3/2}).
\end{align*}
Since $Q_\Sigma$ and $\tilde{J}_{\chi}$ are bounded and linear, smoothness follows. 
\end{proof}
\begin{theorem}\label{Psmooth}
For an integrable source, $s\in L^1$, the map $\mathbb{P}:\mathcal{C}(s)\rightarrow\mathbb{R}^{3.1}\oplus\lie{g}^*$ is smooth.
\end{theorem}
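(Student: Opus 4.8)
The plan is to exploit the fact that each component of $\mathbb{P}$ is already presented as a bulk integral over $\Mo$ in (\ref{ADME}), (\ref{JPterms}) and (\ref{Qterms}), rather than as a surface integral at infinity. The surface integrals need not converge for data with the prescribed weak decay, but the bulk integrands can be regrouped, using the constraint equations (\ref{constraint1})--(\ref{gauss}), into three manifestly controllable pieces: (i) the pairing $\int_{\Mo}\xiref\cdot\Phi(\gpi)$ of the fixed reference field against the constraint map; (ii) a finite collection of bulk integrals that are multilinear in the fields with coefficients that are smooth functions of $g$; and (iii) the boundary integrals over $\Sigma$ appearing in (\ref{JPterms}) and (\ref{Qterms}). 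On $\mathcal{C}(s)$ the first piece is the constant $\int_{\Mo}\xiref\cdot s$, finite because $\xiref$ is bounded and $s\in L^1$; the third is exactly (a multiple of) $\tilde{J}$ and $Q_\Sigma$, which are smooth by Lemma \ref{LemQJ}; so everything reduces to controlling (ii).

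For the energy $\mathbb{P}_0$, the only slowly-decaying term in (\ref{ADME}) is the leading second-derivative term $\hat\xi^0_\infty\og^{ik}\og^{jl}(\onabla_k\onabla_l g_{ij}-\onabla_i\onabla_k g_{jl})$ (the second term is compactly supported since $\onabla\hat\xi^0_\infty$ is), which is the linearisation about $\og$ of the scalar-curvature density. Writing $R\sqrt{g}$ as this linearisation plus a remainder quadratic in $\onabla g$ and lower-order in $(g-\og)\onabla^2 g$, I would substitute the Hamiltonian constraint $R\sqrt{g}=\Phi_0-(\tfrac12(\pi^k_k)^2-\pi^{ij}\pi_{ij}-\tfrac18\varepsilon^k_a\varepsilon^a_k-2B^k_aB^a_k)g^{-1/2}$ and use $\Phi_0=s_0$. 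The borderline piece becomes $\hat\xi^0_\infty s_0\in L^1$, while the remaining terms are products of the form $(g-\og)\cdot\onabla^2 g$, $(\onabla g)^2$, $\pi\cdot\pi$, $\varepsilon\cdot\varepsilon$ and $B\cdot B$; each lies in $L^1$ by the weighted Hölder inequality (e.g. $L^2_{-3/2}\cdot L^2_{-3/2}\subset L^1_{-3}$), and defines a bounded multilinear, hence smooth, functional of the data.

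For $\mathbb{P}_i$ and $\mathbb{P}_a$ the same mechanism applies: in (\ref{JPterms}) the terms $2\xi^i_{\tref}\onabla_j\pi^j_i$ and $\onabla_i(\varepsilon^i_a A^a_j)\xi^j_{\tref}$ carry the non-integrable divergences, which I would absorb into the momentum constraint (\ref{constraint2}) and Gauss constraint (\ref{gauss}) (the difference between $\nabla$ and $\onabla$ produces only integrable $\onabla(g-\og)\cdot\pi$ corrections), while in (\ref{Qterms}) the term $\onabla_i E^i_a$ is handled directly by (\ref{gauss}). On $\mathcal{C}(s)$ the divergences collapse to $\int\xiref\cdot s$, the $\Sigma$-integrals are $\tilde J$ and $Q_\Sigma$, and the residual bulk terms are again bounded multilinear forms. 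Here the decomposition $A=A_\lie{z}+A_\lie{k}$ is essential: the commutator terms $C^c_{ab}A^b\varepsilon$ only involve $A_\lie{k}\in W^{2,2}_{-3/2}$, so $A_\lie{k}\cdot\varepsilon\in L^2_{-3/2}\cdot L^2_{-3/2}\subset L^1_{-3}$ is integrable, whereas the slower $A_\lie{z}$ decay alone would not suffice.

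Finally, smoothness follows because a bounded multilinear map between Hilbert spaces is $C^\infty$, the $g$-dependent coefficients ($\sqrt{g}$, $g^{-1/2}$, Christoffel differences) are smooth Nemytskii operators on $W^{2,2}$ (a Banach algebra continuously embedded in $C^0$ in three dimensions, with $g$ bounded away from degeneracy), and the restriction to the embedded submanifold $\mathcal{C}(s)$ of a smooth map on $\mathcal{F}$ is smooth. The \emph{main obstacle} is the weighted-space bookkeeping of step (ii): one must verify that every non-constraint term, after the regrouping, genuinely lands in $L^1$ at the borderline decay rates, that the connection differences $\nabla-\onabla$ and the metric corrections contribute only integrable remainders, and that the constraint substitution leaves no spurious boundary contribution at $\Sigma$ or at infinity. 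It is precisely this bookkeeping that forces the restriction to $\mathcal{C}(s)$ with $s\in L^1$, rather than to all of $\mathcal{F}$.
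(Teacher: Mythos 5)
Your proposal is correct and follows essentially the same route as the paper's proof: substitute the constraint equations to trade the borderline divergence terms ($R\sqrt{g}$, $\onabla_j\pi^j_i$, $\onabla_iE^i_a$) for the integrable source plus quadratic remainders controlled by weighted H\"older inequalities, invoke Lemma \ref{LemQJ} for the $\Sigma$-boundary terms, and conclude smoothness from boundedness plus multilinearity. The only cosmetic difference is that the paper outsources the $\mathbb{P}_0$ bookkeeping to Bartnik's argument rather than spelling it out as you do.
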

\begin{proof}
$\mathbb{P}_0$ is exactly of the form considered by Bartnik $\cite{phasespace}$, except that the integrals are over a manifold with boundary in our case. However, this difference does not affect Bartnik's proof that $\mathbb{P}_0$ is smooth so the result applies here also. $\mathbb{P}_i$ differs from Bartnik's by some Yang-Mills terms and the term $16\pi\Jxi$, so we consider it again here. Lemma \ref{LemQJ} shows that $\Jxi$ is smooth, so we must only consider the bulk (volume) integral, which is shown to be smooth by the same reasoning as that used by Bartnik. Note that the second and fourth terms in the bulk integral defining $\mathbb{P}_i$ (\ref{JPterms}) are clearly bounded as $\onabla\xi$ has bounded support. The remaining two terms are estimated as follows:
\begin{equation*}
\int_{\mathcal{M}_0}2\xi^i_{\tref}\onabla_j\pi_i^j\leq c\|\xi^i_{\tref}\|_{\infty,0}\|\onabla\cdot\pi\|_{1,-3},
\end{equation*}
which is then controlled by the fact that we have an integrable source. Recalling the difference of connections tensor,
\begin{equation*}
\tilde{\Gamma}^i_{jk}:=\Gamma^i_{jk}-\mathring{\Gamma}^i_{jk}=\frac{1}{2}g^{il}(\onabla_j g_{lk}+\onabla_k g_{jl}-\onabla_l g_{jk}),
\end{equation*}
and making use of the momentum constraint (\ref{constraint2}), we have
\begin{align*}
\|\onabla\cdot\pi\|_{1,-3}\leq&\, c(\|\nabla\cdot\pi\|_{1,-3}+\|\tilde{\Gamma}\pi\|_{1,-3})\\
\leq &\, c(\|s\|_{1,-3}+\|\varepsilon\onabla A\|_{1,-3}+\|A\onabla\varepsilon\|_{1,-3}+\|\tilde{\Gamma}\|_{2,-3/2}\|\pi\|_{2,-3/2})\\
\leq &\, c\big{(}\|s\|_{1,-3}+\|\varepsilon\|_{2,-3/2}\|\onabla A\|_{2,-3/2}\\
&+\|A\|_{2,-1/2}\|\onabla\varepsilon\|_{2,-5/2}+\|\onabla g\|_{2,-3/2}\|\pi\|_{2,-3/2}\big{)}\\
\leq &\, c(\|s\|_{1,-3}+\|\varepsilon\|_{1,2,-3/2}\|A\|_{1,2,-1/2}+\|\onabla g\|_{2,-3/2}\|\pi\|_{2,-3/2}).
\end{align*}
Similarly, we have
\begin{align}
\int_{\mathcal{M}_0}\xiref^j\onabla_i(\varepsilon^i_aA^a_j)&\leq c\|\xiref\|_{\infty,0}(\|A\onabla\varepsilon\|_{1,-3}+\|\varepsilon\onabla A\|_{1,-3})\nonumber\\
&\leq c\|\xiref\|_{\infty,0}(\|A\|_{2,-1/2}\|\onabla\varepsilon\|_{2,-5/2}+\|\varepsilon\|_{2,-3/2}\|\onabla A\|_{2,-3/2})\label{momentumterminham}.
\end{align}
Since the bulk integral is linear in each of the variables and bounded, smoothness follows; that is, $\mathbb{P}_i$ is smooth.

The remaining component, $\xi^a_\infty\mathbb{P}_a$, consists of a bulk integral plus the term $\xi_\Sigma^aQ_{\Sigma\, a}$ (\ref{Qterms}); the latter is again smooth by Lemma \ref{LemQJ} and the bulk integral is estimated similarly to the above. The second term in the bulk integral is clearly bound again as $\onabla\xiref$ has bounded support, and the first term makes use of the Gauss constraint (\ref{gauss}) and the fact that the source is integrable,
\begin{align*}
\int_{\mathcal{M}_0}\xiref^a\onabla_iE^i_a&\leq c(\|\xiref\|_{\infty,0}\|\onabla\cdot E\|_{1,-3})\\
&\leq c\|\xiref\|_{\infty,0}(\|s\|_{1,-3}+\|A_{\lie{k}}\varepsilon\|_{1,-3})\\
&\leq c\|\xiref\|_{\infty,0}(\|s\|_{1,-3}+\|A_{\lie{k}}\|_{2,-3/2}\|\varepsilon\|_{2,-3/2}).
\end{align*}
It follows that $\mathbb{P}$ is smooth.
\end{proof}


\section{Hamiltonians and The First Law}\label{Shamiltonian}
It is well-known that the source-free evolution equations can be succinctly written as
\begin{equation}
\frac{d}{dt}\begin{bmatrix}g\\A\\ \pi\\ \varepsilon\end{bmatrix}=-\begin{bmatrix}0&0&1&0\\0&0&0&1\\-1&0&0&0\\0&-1&0&0\end{bmatrix} \circ D\Phi_{(g,A,\pi,\varepsilon)}^*(\xi),\label{hamiltonseq}
\end{equation}
where $D\Phi_{(g,A,\pi,\varepsilon)}^*$ is the formal adjoint of the linearisation of $\Phi$, and $t$ is interpreted as the flow parameter of a vector field on ${}^4\hspace{-0.6mm} P$, identified with $\xi$ (see, for example, \cite{ArmsEYM,etal1}). The flow of $\xi$ is interpreted as a simultaneous time-evolution and continuous change of gauge. Equation (\ref{hamiltonseq}) motivates Moncrief's result, equating stationary solutions, with initial data satisfying $D\Phi_{(g,A,\pi,\varepsilon)}^*(\xi)=0$ for some $\xi^\mu$ corresponding to a time translation at infinity \cite{Moncrief1,Moncrief2} (see also the subsequent work by Arms, Marsden and Moncrief in the Einstein-Yang-Mills case \cite{etal1}). Such an initial data set, we call a \textit{generalised stationary initial data set}.

If the formal adjoint agrees with the true adjoint, then these evolution equations correspond exactly to Hamilton's equations for the usual ADM Hamiltonian,
\begin{equation}
\mathcal{H}^{ADM\,(\xi)}(g,A,\pi,\varepsilon):=-\int_\mathcal{M}\xi\cdot \Phi(g,A,\pi,\varepsilon)\label{HADM}.
\end{equation}
Unfortunately, this is not the case when $\mathcal{M}$ is an asymptotically flat manifold as the formal adjoint differs from the true adjoint by a collection of boundary terms unless $\xi$ vanishes sufficiently fast at infinity. In order to generate the correct equations of motion, the first variation of the Hamiltonian density must be of the form
\begin{equation}
D H_{(g,A,\pi,\varepsilon)}(h,b,p,f)=\Xi\cdot(h,b,p,f)\label{hamform},
\end{equation}
for some $\Xi \in T^*_{(g,A,\pi,\varepsilon)}\mathcal{F}$.

In the pure gravity case with no interior boundary, Regge and Teitelboim demonstrated that by adding the ADM mass to the ADM Hamiltonian, the correct equations of motion are obtained \cite{RT}. In Paper I, where we consider the Einstein-Yang-Mills case with no interior boundary, we also add a charge term, corresponding to the additional Yang-Mills energy. However, problems arise when one looks at the evolution exterior to some boundary.

Na\"ively using the ADM Hamiltonian density (\ref{HADM}), we find
\begin{align}
D H^{ADM\,(\xi)}_{(g,A,\pi,\varepsilon)}(h,b,p,f)=&-D\Phi^*_{(g,A,\pi,\varepsilon)}(\xi)\cdot(h,b,p,f)+\nabla^i\Big{(}(\xi^0(\onabla_i\text{tr}_gh-\nabla^jh_{ij})\nonumber \\
&+\onabla^j(\xi^0)h_{ij}-\text{tr}_g h\onabla_i(\xi^0))\sqrt{g}-2\xi^j p_{ij}+\xi^a f_{ai}-2\pi^k_i h_{jk}\xi^j\nonumber\\
&+ \pi^{jk}h_{jk}\xi_i -\epsilon_{ijk}b^{ak}B^j_a \xi^0\sqrt{g}-\varepsilon_{ia}b^a_j\xi^j+\xi_i\varepsilon^j_ab^a_j-f_{ia}A_j^a \xi^j\Big{)}\label{DHADM}.
\end{align}
The first term is exactly of the form we require (\ref{hamform}), however the cumbersome divergence term does not vanish in general. Fortunately, it does have the following geometric interpretation to be exploited. Let $\Sigma$ be the bifurcation surface of a bifurcate Killing horizon, $\xi^\mu$ be the stationary Killing field and $\phi^\mu$ be the rotational Killing field tangent to $\mathcal{M}$ with $2\pi$-periodic orbits; we then have $\xi^\mu+\Omega\phi^\mu\equiv 0$ on $\Sigma$ for some constant $\Omega$, which is to be interpreted as angular velocity of the horizon. The zeroth law of black hole mechanics states that the surface gravity $\kappa=\frac{1}{2}n^i\nabla_i\xi^0$ is constant on $\Sigma$, where $n^i$ is the unit normal to $\Sigma$ pointing towards infinity in $M_0$. We also ask that the electric potential, $V^a=\xi^a$ be constant at infinity and on $\Sigma$. In this case, the expression (\ref{DHADM}) becomes
\begin{align}
\int_{\mathcal{M}_0}D H^{ADM\,(\xi)}_{(g,A,\pi,\varepsilon)}(h,b,p,f)=&-\int_{\mathcal{M}_0}D\Phi^*_{(g,A,\pi,\varepsilon)}(\xi)\cdot(h,b,p,f)-16\pi Dm_{(g,\pi)}(h,p)\nonumber\\
&+2\kappa D\Area_{\Sigma\, g}(h)
+16\pi\Omega DJ_{\Sigma\,(g,A,\pi,\varepsilon)}(h,b,p,f)\label{DHADM2}\\
&+ 16\pi (V_\Sigma \cdot DQ_{\Sigma\, (\varepsilon)}(f)-V_\infty \cdot DQ_{\infty\, (\varepsilon)}(f))\nonumber,
\end{align}
where $\Area_\Sigma(g)$ is the surface area $\Sigma$ and $m(g,\pi)=\sqrt{-\mathbb{P}^\mu\mathbb{P}_\mu}$ is the total mass. Note that the fact $\mathbb{P}^\mu$ is timelike follows from the positive mass theorem, assuming the dominant energy condition (see Theorem 11.2 of \cite{diractype}). Compare this to the first law of black hole mechanics, which states that for perturbations to a stationary solution the following variational formula holds:
\begin{equation}
\delta m=\frac{\kappa}{8\pi}\delta \Area_\Sigma + \Omega\delta J + V\cdot\delta Q.
\end{equation}
This motivates an interesting result of Ashtekar, Fairhurst and Krishnan \cite{AFKfirstlaw} in the framework of isolated horizons. They considered the ADM Hamiltonian on a manifold with an interior boundary representing an isolated horizon, and demonstrated that the validity of the first law is a necessary and sufficient condition for the evolution to be Hamiltonian. However, we take a different approach regarding these additional terms corresponding to the first law. A new Hamiltonian is introduced, \`a la Regge and Teitelboim, that gives the correct equations of motion somewhat more generally, and the first law plays quite a different role. Define the modified Hamiltonian,
\begin{equation}
\mathcal{H}^{RT\,(\xi)}(g,A,\pi,\varepsilon):=16\pi(\xi_\infty\cdot\mathbb{P}+\tilde{J}_\xi-\xi_\Sigma^aQ_{\Sigma \, a})-\int_{\mathcal{M}_0}\xi\cdot\Phi\label{RTHam},
\end{equation}
for some $\xi\in W^{2,2}_{\xiref}$. As before, we fix $\xiref$ on $\Sigma$ such that $\xiref^0=0$, $\xiref^a$ is constant and $\xiref^i$ is tangent to $\Sigma$. Note that $(\xi_\infty\cdot\mathbb{P}+\tilde{J}_\xi-\xi_\Sigma^aQ_{\Sigma \, a})$ only depends on the boundary and asymptotic values of $\xi$, so the Hamiltonian essentially acts as a Lagrange function; extremising the Hamiltonian is equivalent to extremising $(\xi_\infty\cdot\mathbb{P}+\tilde{J}_\xi-\xi_\Sigma^aQ_{\Sigma \, a})$ subject to the constraints being satisfied, where $\xi$ with prescribed boundary and asymptotic conditions, acts as the Lagrange multiplier. This is the basic idea behind Theorem \ref{main2}, below.

Note that the first and last terms in (\ref{RTHam}) are divergent in general, however following Bartnik \cite{phasespace} (see also, Paper I), we combine the integrals and the dominant terms of each cancel out. This leads us to the regularised Hamiltonian,
\begin{align}\label{reghamil}
\mathcal{H}^\xi(g,A,\pi,\varepsilon):=&\int_{\mathcal{M}_0}(\xiref^\alpha-\xi^\alpha)\Phi_\alpha+\int_{\mathcal{M}_0}\xiref^0(\og^{ki}\og^{lj}\onabla_k\onabla_l g_{ij}-\mathring\Delta(\tr_{\mathring{g}}g)\sqrt{\mathring{g}}-\Phi_0)\nonumber\\
&+\int_{\mathcal{M}_0}\og^{ik}\og^{lj}\onabla_k(\xiref^0)(\onabla_j g_{ij}-\onabla_i\tr_{\mathring{g}}g)\sqrt{\mathring{g}}\nonumber\\
&+\int_{\mathcal{M}_0}\xiref^i(\onabla_j(2\pi^j_i+\varepsilon^j_a A^a_i)-\Phi_i)+\int_{\mathcal{M}_0}(2\pi^{i}_j+\varepsilon^i_a A^a_j)\onabla_i\xi^j_{\tref}\nonumber\\
&-\int_{\mathcal{M}_0}\xiref^a(\onabla_i \varepsilon^i_a-\Phi_a)-\int_{\mathcal{M}_0}\varepsilon^i_a\onabla_i\xiref^a,
\end{align}
which is defined on all of $\mathcal{F}$.

\begin{prop}
The regularised Hamiltonian, $\mathcal{H}_\xi:\mathcal{F}\rightarrow\mathbb{R}$, is well-defined and smooth.
\end{prop}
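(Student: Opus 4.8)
The plan is to split $\mathcal{H}^\xi$ in (\ref{reghamil}) into its constituent integrals and to show that each is a smooth real-valued function on $\mathcal{F}$; since a bounded multilinear map between Banach spaces is smooth, smooth maps compose, and $\Phi:\mathcal{F}\rightarrow\mathcal{N}^*$ is smooth (Paper I), it suffices to exhibit each integrand as a bounded duality pairing, a bounded multilinear form, or a compactly supported integral. I would begin with the pairing $\int_{\Mo}(\xiref^\alpha-\xi^\alpha)\Phi_\alpha$: since $\xi\in W^{2,2}_{\xiref}$, the difference $(\xiref-\xi)$ is a fixed element of $L^2_{-1/2}=\mathcal{N}$, while $\Phi\in\mathcal{N}^*=L^2_{-5/2}$, and the weights are conjugate ($-\tfrac12-\tfrac52=-3$), so this is the continuous pairing $\langle\xiref-\xi,\Phi\rangle_{\mathcal{N},\mathcal{N}^*}$, a bounded linear functional composed with the smooth map $\Phi$, hence smooth. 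This is precisely the term that would diverge if the integrals of (\ref{RTHam}) were separated, but it is finite here because $\xiref-\xi$ decays.

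Next come the gravitational terms on the second and third lines of (\ref{reghamil}), which constitute Bartnik's regularised mass integrand. The key observation is that $\og^{ki}\og^{lj}\onabla_k\onabla_l g_{ij}-\mathring\Delta(\tr_{\og}g)$ is the linearisation $DR_{\og}(g-\og)$ of the scalar curvature about the flat background, so the integrand $(\cdots)\sqrt{\og}-\Phi_0$ is the quadratic Taylor remainder of $R\sqrt{g}$ about $\og$, minus the matter part of $\Phi_0$; the leading second-derivative divergence of $g$ therefore cancels. Writing $h=g-\og$, the remainder is a sum of products of type $h\,\onabla^2 h$ and $(\onabla h)^2$, weighted by smooth functions of $g$, together with the quadratic matter terms $\pi\pi$, $\varepsilon\varepsilon$, $BB$; by the weighted multiplication estimates these all lie in $L^1_{-3}=L^1$ (e.g.\ $L^2_{-1/2}\cdot L^2_{-5/2}\subset L^1_{-3}$), so against the bounded $\xiref^0$ they integrate to a quantity depending smoothly on $(g,A,\pi,\varepsilon)$, using the smoothness of $g\mapsto R(g)\sqrt{g}$, $g^{-1/2}$ and $\sqrt{g}$ on $\mathcal{G}^+$ established by Bartnik and in Paper I. The remaining third-line term carries $\onabla\xiref^0$, which is compactly supported, so it is a manifestly finite linear (hence smooth) functional of $g$.

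For the momentum and Gauss terms I would exploit the same cancellation already used in Theorem \ref{Psmooth}: the point is that every genuine derivative of the momenta $\pi,\varepsilon$ drops out. In $\onabla_j(2\pi^j_i+\varepsilon^j_a A^a_i)-\Phi_i$ the divergence $\onabla_j\pi^j_i$ cancels against $2\nabla^j\pi_{ij}\subset\Phi_i$ and the piece $(\onabla_j\varepsilon^j_a)A^a_i$ cancels against the matching term of $\Phi_i$, leaving only products $\tilde\Gamma\,\pi\sim\onabla g\cdot\pi$ and $\varepsilon\,\onabla A$; likewise, using (\ref{gauss}), the net divergence of $\varepsilon$ in $\onabla_i\varepsilon^i_a-\Phi_a$ cancels, leaving a product of type $C^c_{ab}A^b_j\varepsilon^j_c$. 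These are exactly the quadratic combinations bounded in $L^1$ in the proof of Theorem \ref{Psmooth}, now integrated against the bounded $\xiref^i,\xiref^a$. The leftover pieces $(2\pi^i_j+\varepsilon^i_a A^a_j)\onabla_i\xi^j_{\tref}$ and $\varepsilon^i_a\onabla_i\xiref^a$ carry $\onabla\xi_{\tref}$, which is compactly supported since $\onabla\tilde\xi_\infty=0$ near infinity and $\hat\xi_\Sigma$ has compact support, so they are finite products over a compact region. Each such term is a bounded bi- or tri-linear form on $\mathcal{F}$, hence smooth.

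Assembling these, $\mathcal{H}^\xi$ is a finite sum of smooth functionals and is therefore well-defined and smooth. The one genuinely delicate step is expected to be the gravitational term: verifying the exact cancellation of $DR_{\og}$ against the leading part of $\Phi_0$ and the smoothness of the nonlinear map $g\mapsto R(g)\sqrt{g}$ into $L^2_{-5/2}$ (together with $g^{-1/2}$ and $\sqrt{g}$) on $\mathcal{G}^+$, which is Bartnik's technical core. This transfers directly, because the interior boundary $\Sigma$ and the multiple ends play no role here: (\ref{reghamil}) contains only bulk integrals over $\Mo$ and no integration by parts is performed, so, in contrast to Lemma \ref{LemQJ}, no trace or boundary estimates are needed. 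The only new ingredients are the Yang--Mills products, which are routine given the multiplication estimates above; and, crucially, no integrability of the source is required, precisely because the regularisation removes every genuine derivative of $\pi$ and $\varepsilon$, which is what allows $\mathcal{H}^\xi$ to be defined on all of $\mathcal{F}$.
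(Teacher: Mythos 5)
Your proposal is correct, but it is far more self-contained than the paper's own proof, which is essentially a two-line citation: the paper observes that $\mathcal{H}^\xi$ coincides with the regularised Hamiltonian of Paper I (Theorem 4.4 there, itself modelled on Bartnik's Theorem 4.4) except for the interior boundary, which changes nothing, and the two extra Yang--Mills momentum terms $\int_{\Mo}\xiref^i\onabla_j(\varepsilon^j_aA^a_i)$ and $\int_{\Mo}\varepsilon^i_aA^a_j\onabla_i\xiref^j$, which are linear and bounded by (\ref{momentumterminham}) and by the bounded support of $\onabla\xiref$ respectively. What you have done instead is unpack the content of the cited theorem: the duality pairing $L^2_{-1/2}\times L^2_{-5/2}\to\mathbb{R}$ for $\int(\xiref-\xi)\cdot\Phi$ composed with the smooth constraint map, the cancellation of the leading second-derivative part of $R\sqrt{g}$ against $\og^{ik}\og^{jl}\onabla_k\onabla_lg_{ij}-\mathring\Delta\tr_{\og}g$, and the cancellation of the genuine derivatives of $\pi$ and $\varepsilon$ against $\Phi_i$ and $\Phi_a$, leaving quadratic products controlled by $L^2_{\delta_1}\cdot L^2_{\delta_2}\subset L^1_{\delta_1+\delta_2}$ with $\delta_1+\delta_2=-3$. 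These are exactly the estimates underlying the cited result, so the mathematics is the same; your version buys a reader independence from Paper I at the cost of length, and your closing observations (no trace estimates are needed since there is no integration by parts, and no integrable source is needed because the regularisation removes all derivatives of the momenta --- in contrast to Theorem \ref{Psmooth}) are accurate and worth making explicit. Two minor points: the paper actually bounds $A\onabla\varepsilon$ directly via $L^2_{-1/2}\cdot L^2_{-5/2}\subset L^1$ rather than through the Gauss-law cancellation you invoke (the cancellation is genuinely needed only for the $\onabla\pi$ term), and your description of the second-line integrand as a Taylor remainder about $\og$ should be read modulo the compact core, where $\og$ need not be flat; neither affects the argument.
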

\begin{proof}
This Hamiltonian is exactly of the form considered in Paper I, except that the integrals are performed over a manifold with boundary here, and we have the additional momentum terms, $\int_{\mathcal{M}_0}\xiref^i\onabla_j(\varepsilon^j_aA^a_i)$ and $\int_{\mathcal{M}_0}\varepsilon^i_aA^a_j\onabla_i\xiref^j$. As above, the fact that the manifold has a boundary does not affect the proof at all and, up to the addition of the additional momentum terms, we conclude $\mathcal{H}^\xi$ is smooth from Theorem 4.4 of Paper I. The additional momentum terms are linear in their arguments so simply must be shown to be bounded to demonstrate that they too are smooth. The latter momentum term is clearly bound since $\onabla\xiref$ has bounded support and the former is estimated by (\ref{momentumterminham}).

\end{proof}
An immediate corollary of Theorem 4.2 from Paper I is the following.
\begin{prop}\label{thmadmhamil2}
For $\xi\in W^{2,2}_{-1/2\, c}$, we have
\begin{equation}
\int_{\mathcal{M}_0}\xi\cdot D\Phi_{(\gpi)}(h,b,p,f)=\int_{\mathcal{M}_0}(h,b,p,f)\cdot D\Phi^*_{(\gpi)}(\xi),
\end{equation}
for all $(h,b,p,f)\in T_{(\gpi)}\mathcal{F}$.
\end{prop}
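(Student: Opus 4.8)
The plan is to read the statement as the assertion that the \emph{formal} adjoint $D\Phi^*$, which is \emph{defined} by transferring derivatives off the variations via integration by parts, agrees with the true $L^2$ adjoint precisely when no boundary contributions survive. Concretely, for each of the three constraints one linearises $\Phi$ at $(\gpi)$, pairs the result against $\xi$, and integrates by parts until every derivative has been moved from $(h,b,p,f)$ onto $\xi$; the difference between the two sides of the claimed identity is then a sum of boundary integrals over $\partial\Mo = \Sigma\cup S_\infty$. Since $\xi\in W^{2,2}_{-1/2\,c}$ is by definition a limit in the $W^{2,2}_{-1/2}$ norm of sections in $C^\infty_c(\Mo)$, both $\xi$ and $\onabla\xi$ have vanishing trace on $\Sigma$ and decay at infinity, so each such boundary integral vanishes. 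I would first establish the identity for $\xi\in C^\infty_c(\Mo)$ and then pass to the limit, using that both sides are bounded in the relevant norms (the same boundedness already invoked in Theorem \ref{Psmooth} and the proposition preceding this one).

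The bulk of the operator is identical to the one treated in Paper I, so for those terms the identity is exactly Theorem 4.2 of Paper I, which I would simply invoke. The one structural novelty is that $\Mo$ now carries the interior boundary $\Sigma$, but this changes nothing: because $\xi$ is compactly supported in the interior of $\Mo$, any boundary integral produced over $\Sigma$ by the integration by parts vanishes in exactly the same way as those at $S_\infty$. In particular, the twice-integrated scalar-curvature term in $\Phi_0$ generates boundary terms in $\xi^0,\onabla\xi^0,h,\onabla h$, all of which drop out once $\xi^0$ and $\onabla\xi^0$ vanish near $\partial\Mo$.

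It then remains only to treat the term added to the momentum constraint (\ref{constraint2}), namely $\onabla_j(\varepsilon^j_a)A_i^a$. Its linearisation is $\onabla_j(f^j_a)A_i^a+\onabla_j(\varepsilon^j_a)b_i^a$; the second summand carries no derivative on the variation and is already in adjoint form, contributing $\onabla_j(\varepsilon^j_a)\xi^i$ to the $A$-slot of $D\Phi^*(\xi)$, while for the first I would integrate by parts,
\begin{equation*}
\int_\Mo \xi^i\,\onabla_j(f^j_a)A_i^a=-\int_\Mo \onabla_j\big(\xi^i A_i^a\big)f^j_a+\oint_{\partial\Mo}\xi^i A_i^a f^j_a\,dS_j,
\end{equation*}
and discard the boundary integral by compact support. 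This identifies the new contribution to the $\varepsilon$-slot of $D\Phi^*(\xi)$ as $-\onabla_j(\xi^i A_i^a)$, completing the verification.

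The argument has no serious obstacle; the only points requiring care are bookkeeping. First, one must check that the regularity of $\xi$ together with the spaces for $(h,b,p,f)$ makes every product appearing in the two integrals $L^1$, so that the integration by parts is legitimate and the $C^\infty_c$ density argument applies; this is routine given the H\"older-type estimates already used above. Second, one must confirm that $W^{2,2}_{-1/2\,c}$ genuinely forces vanishing traces on \emph{both} components of $\partial\Mo$, which is immediate from its definition as the completion of $C^\infty_c(\Mo)$, whose elements vanish near $\Sigma$ and near infinity alike. Thus the only mild subtlety is simply the recognition that compact support, and nothing more, is exactly what annihilates every boundary term at both ends of $\partial\Mo$.
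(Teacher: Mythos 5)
Your proposal is correct and follows essentially the same route as the paper: compute the difference of the two pairings as a total divergence, convert it to surface integrals over $\Sigma$ and at infinity, kill the terms at infinity via Theorem 4.2 of Paper I and the terms on $\Sigma$ via the approximation of $\xi\in W^{2,2}_{-1/2\,c}$ by compactly supported sections, with the new Yang--Mills momentum term $f_{ai}\xi^jA_j^a$ handled identically. Your extra care with the density/limiting argument and the explicit integration by parts of the added constraint term is just a more detailed rendering of the same argument.
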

\begin{proof}
The difference, $(h,b,p,f)\cdot D\Phi^*_{(\gpi)}(\xi)-\xi\cdot D\Phi_{(\gpi)}(h,b,p,f)$, is easily computed to give
\begin{align*}
&\nabla^i\Big{(}(\xi^0(\onabla_i\text{tr}_gh-\nabla^jh_{ij})+\onabla^j(\xi^0)h_{ij}-\text{tr}_g h\onabla_i(\xi^0))\sqrt{g}-2\xi^jp_{ij}+\xi^af_{ai}\Big{)}\\
&-\nabla^i\Big{(}2\pi^k_i h_{jk}\xi^j - \pi^{jk}h_{jk}\xi_i +\epsilon_{ijk}b^{ak}B^j_a \xi^0\sqrt{g}+\varepsilon_{ia}b^a_j\xi^j-\xi_i\varepsilon^j_ab^a_j+f_{ai}\xi^jA_j^a\Big{)}.
\end{align*}
The integral of this divergence is then expressed as surface integrals at infinity and on $\Sigma$. The terms at infinity vanish by Theorem 4.2 of Paper I and the terms on $\Sigma$ vanish by the hypothesis $\xi\in W^{2,2}_{-1/2\, c}$. We do have the extra term, $f_{ai}\xi^jA_j^a$, not considered in Paper I, however this clearly vanishes by the same argument.
\end{proof}

\begin{proposition}\label{propreghamil7}
For $\xi\in W^{2,2}_{\xiref}$, the variation of the regularised Hamiltonian is given by
\begin{align}
D\hatH^\xi [h,b,p,f]=&-\oint_\Sigma(\onabla^j(\xi^0)h_{ij}-\text{tr}_g h\onabla_i(\xi^0))\sqrt{g}dS^i-\int_{\mathcal{M}_0}D\Phi^*(\xi)\cdot(h,b,p,f)\label{reghamil3}.
\end{align}
\end{proposition}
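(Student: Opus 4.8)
The plan is to differentiate the regularised Hamiltonian $\hatH^\xi$ of (\ref{reghamil}) directly; we have already shown it is smooth and a finite sum of bulk integrals over $\Mo$, so $D\hatH^\xi$ is obtained by differentiating under the integral sign. It is cleanest to organise the computation through the (formally divergent) Regge--Teitelboim Hamiltonian (\ref{RTHam}), of which (\ref{reghamil}) is the everywhere-defined representative: writing $\hatH^\xi$ as $16\pi(\xi_\infty\cdot\mathbb{P}+\Jxi-\xi_\Sigma^a Q_{\Sigma\,a})-\int_{\Mo}\xi\cdot\Phi$, its variation is the sum of the variations of the surface functionals $\mathbb{P}$, $\Jxi$, $Q_\Sigma$ (all smooth by Theorem \ref{Psmooth} and Lemma \ref{LemQJ}) and of the bulk term $-\int_{\Mo}\xi\cdot\Phi$.

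For the bulk term I would exploit the adjoint structure. The integrand difference between $\xi\cdot D\Phi(h,b,p,f)$ and $(h,b,p,f)\cdot D\Phi^*(\xi)$ is the total divergence exhibited in the proof of Proposition \ref{thmadmhamil2} and in (\ref{DHADM}). Integrating over $\Mo$ therefore gives $-\int_{\Mo}D\Phi^*(\xi)\cdot(h,b,p,f)$ plus the integral of this divergence, which by the divergence theorem becomes surface integrals over $S_\infty$ and over $\Sigma$, with the normal taken toward infinity in $M_0$. Equivalently one may split $\xi=\xiref+\psi$ with $\psi\in W^{2,2}_{-1/2\,c}$ (the defining property of $W^{2,2}_{\xiref}$), apply Proposition \ref{thmadmhamil2} to $\psi$, and use linearity of $D\Phi^*$ in $\xi$; the boundary work is then isolated in the $\xiref$ piece.

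It then remains to collect the surface terms. At infinity, the $S_\infty$ integrals combine with $16\pi D(\xi_\infty\cdot\mathbb{P})$ and with the variations of the regularising terms in (\ref{reghamil}) to cancel, exactly as in the case without interior boundary treated in Paper I and by Bartnik \cite{phasespace}; the presence of $\Sigma$ does not affect this part of the argument. The genuinely new contributions are the integrals over $\Sigma$. Here I would use that $\xi=\xiref$ on $\Sigma$ together with the prescribed boundary values $\xiref^0=0$, $\xiref^i$ tangent to $\Sigma$ and $\xiref^a$ constant. Every term in the divergence carrying an explicit factor of $\xi^0$ (including the magnetic term $\epsilon_{ijk}b^{ak}B^j_a\xi^0$) drops out since $\xi^0$ vanishes on $\Sigma$. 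The momentum-type terms in $\pi$ and $p$ cancel against $16\pi D\Jxi$, while the Yang--Mills and charge terms in $\varepsilon$, $f$, $A$ and $b$ cancel against $-16\pi\xi_\Sigma^a DQ_{\Sigma\,a}$ together with the Yang--Mills part of $16\pi D\Jxi$ --- these cancellations are precisely what the surface functionals $\Jxi$ and $Q_\Sigma$ were introduced to produce. The only surface contribution that is not cancelled is the one coming from the derivative of $\xi^0$, namely $-\oint_\Sigma(\onabla^j(\xi^0)h_{ij}-\tr_g h\,\onabla_i(\xi^0))\sqrt{g}\,dS^i$, and this survives because no mass-type surface functional is evaluated on $\Sigma$ (the mass $\mathbb{P}_0$ lives only at infinity). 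Combining with the bulk adjoint term yields (\ref{reghamil3}).

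The main obstacle is this last step: verifying the $\Sigma$ cancellations term by term. One must match the momentum and charge boundary integrals produced by varying $-\int_{\Mo}\xi\cdot\Phi$ against $16\pi D\Jxi$ and $-16\pi\xi_\Sigma^a DQ_{\Sigma\,a}$, keeping careful account of which indices are raised or lowered --- in particular that varying $\pi^j_i=\pi^{jk}g_{ki}$ produces an extra $\pi^{jk}h_{ki}$ term through the metric --- and of the orientation of $dS$ on $\Sigma$. Getting these signs and contractions to close is the delicate part; by contrast the convergence of the bulk integral and the vanishing of the $S_\infty$ terms are inherited essentially verbatim from Paper I.
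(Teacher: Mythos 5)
Your proposal is correct, and its rigorous variant --- splitting $\xi=\xiref+\psi$ with $\psi\in W^{2,2}_{-1/2\,c}$ and applying Proposition \ref{thmadmhamil2} to the compactly supported piece --- is exactly what the paper does for the first integral of (\ref{reghamil}). The substantive difference is in how the boundary terms on $\Sigma$ are disposed of. You propose to generate all the $\Sigma$ surface integrals from the divergence in (\ref{DHADM}) and then match them against $16\pi D\Jxi$ and $-16\pi\xi^a_\Sigma DQ_{\Sigma\,a}$, which you rightly flag as the delicate index-chasing step. The paper instead works directly with the regularised form (\ref{reghamil}), in which $\Jxi$ and $Q_\Sigma$ have already been absorbed into bulk integrals via (\ref{JPterms}) and (\ref{Qterms}); when these are varied, the momentum and Gauss contributions appear as pairs of divergences of the \emph{same} vector density taken once with $\onabla$ and once with $\nabla$, and since divergences of densities are connection-independent these cancel identically as integrands, before any surface integral is ever formed. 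The only $\Sigma$-relevant terms surviving that cancellation are $\nabla_i(\xiref^i\varepsilon^j_ab^a_j)$, killed by the tangency of $\xiref^i$ to $\Sigma$ (and at infinity by the decay estimate (\ref{vanishYM})), together with the $\onabla\xiref^0$ divergence that you correctly identify as the surviving contribution in (\ref{reghamil3}). This organisation buys you precisely the step you were worried about --- no term-by-term matching of raised and lowered indices against $D\Jxi$ and $DQ_{\Sigma}$ on $\Sigma$ is needed --- and it also sidesteps the fact that the Regge--Teitelboim decomposition (\ref{RTHam}) you lead with is only formal, since $\xi_\infty\cdot\mathbb{P}$ and $\int_{\Mo}\xi\cdot\Phi$ are individually divergent away from constraint sets with integrable source.
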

\begin{proof}
We consider the terms in (\ref{reghamil}) separately. By Proposition \ref{thmadmhamil2}, the variation of the first integral in (\ref{reghamil}) becomes
\begin{equation*}
\int_\Mo (h,b,p,f)\cdot D\Phi^*(\xiref-\xi).
\end{equation*}
The variation of the second and third terms combine to give 
\begin{align}
\int_\Mo&\Big{\{}\og^{ik}\onabla_k(\xiref^0 \og^{jl}(\onabla_l h_{ij}-\onabla_i h_{jl}))\sqrt{\mathring{g}}-\nabla^i(\xiref^0(\nabla^jh_{ij}-\nabla_i\tr_g h))\sqrt{g}\nonumber\\
&+\nabla^i(h_{ij}\nabla^j\xiref^0-\tr_g h\nabla_i\xiref^0)\sqrt{g}-(h,b,p,f)\cdot D\Phi_{0}^*(\xiref^0)\Big{\}}.\label{blahgh}
\end{align}
Then the first two terms in the above combine to give a total divergence,
\begin{align}
-\oint_{\Mo}&\onabla_k\Big{(}g^{ik}\xiref^0 g^{jl}(\nabla_lh_{ij}-\nabla_i h_{jl})(\sqrt{g}-\sqrt{\og})+(g^{ik}-\og^{ik})\xiref^0 g^{jl}(\nabla_lh_{ij}-\nabla_i  h_{jl})\sqrt{\og}\nonumber\\
&+\og^{ik} \og^{jl}\xiref^0((\nabla_l-\onabla_l)h_{ij}-(\nabla_i-\onabla_i) h_{jl})\sqrt{\og}\Big{)}\label{inftyboundary1},
\end{align}
which is rewritten as surface integrals, both at infinity and on $\Sigma$. The integral at infinity is identical to that considered by Bartnik \cite{phasespace} and therefore vanishes by the same argument, while the surface integral on $\Sigma$ vanishes since $\xi_\Sigma^0=0$. The third term in (\ref{blahgh}) is again a divergence, but only gives a boundary term on $\Sigma$ since $\onabla\xiref$ has bounded support. This boundary term on $\Sigma$ is then exactly the surface integral in (\ref{reghamil3}).

The variation of the fourth and fifth terms in (\ref{reghamil}) give
\begin{align}
\int_\Mo\Big{\{}&2\onabla_i(\xiref^j p^i_j)+2\onabla_j(\xiref^i\pi^{jk}h_{ki})+\nabla_i(\varepsilon^i_ab^a_j\xiref^j)+\nabla_i(f_a^i\xiref^jA_j^a)\nonumber\\
&-2\nabla_i(\xiref^j p^i_j)-2\nabla_i(\pi^{ki}h_{jk}\xiref^j)-\nabla_i(\varepsilon^i_ab^a_j\xiref^j)-\nabla_i(f_a^i\xiref^jA_j^a)\label{cancel32}\\
&+\nabla_i(\xiref^i\varepsilon^j_ab^a_j)-(h,b,p,f)\cdot D\Phi_{i}^*(\xiref^i)\Big{\}}\nonumber,
\end{align}
Since $p$, $\pi$, $f$ and $\varepsilon$ are densities, the divergences above do not depend on the connection used and thus the first two lines in (\ref{cancel32}) cancel exactly. The surface integral on $\Sigma$ arrising from the remaining divergence in \ref{cancel32} vanishes since $\xiref^i$ is tangent to $\Sigma$ and the surface integral at infinity is shown to vanish as follows. Let $S_R=\{x\in M_0\st r(x)=R\}$ and, noting $b$ and $\xiref$ are continuous by the Sobolev-Morrey embedding, we have
\begin{align}
|\oint_{S_R}\xiref^i\varepsilon^j_ab^a_jdS_i|&\lesssim \|b\|_{\infty(S_R)}\|\xiref\|_{\infty(S_R)} \oint_{S_R} |\varepsilon|dS \nonumber \\
&\lesssim o(r^{-1/2})O(1)R^{1/2}\|\varepsilon\|_{1,2,-3/2}\label{vanishYM}\nonumber \\
&=o(1),
\end{align}
where we have made use of the estimate,
\begin{equation*}
\oint_{S_R}|u|dS\leq cR^{1/2}\|u\|_{1,2,-3/2},
\end{equation*}
from \cite{phasespace} (Theorem 4.4). It follows that $\oint_\infty\xiref^i\varepsilon^j_ab^a_jdS_i=0$ and therefore the variation of the fourth and fifth terms in (\ref{reghamil}) reduce to
\begin{equation}
-\int_\Mo (h,b,p,f)\cdot D\Phi_i^*(\xiref^i).
\end{equation}
Finally, the variation of the sixth and seventh terms in (\ref{reghamil}) are given by
\begin{equation}\label{cancel4}
\int_\Mo-\onabla_i(\hat{\xi}^a_\infty f^i_a)+\nabla_i(\hat{\xi}_\infty^af_a^i)-(h,b,p,f)\cdot D\Phi_a^*(\hat{\xi}^a_\infty).
\end{equation}
Since $f$ is a density, the divergences again do not depend on the connection and therefore the first two terms in (\ref{cancel4}) cancel exactly, leaving
\begin{equation}\label{cancel8}
-\int_\mathcal{M}(h,b,p,f)\cdot D\Phi_a^*(\hat{\xi}^a_\infty).
\end{equation}
Assembling all of the pieces completes the proof.
\end{proof}

If $\Sigma$ is indeed the bifurcation surface of a bifurcate Killing horizon, corresponding to the Killing vector $\xi+\xiref$, then $\xi^0=0$ on $\Sigma$ and the surface gravity, $\kappa=\frac{1}{2}n^i\onabla_i(\xi^0)$, is constant. It follows that $\onabla(\xi^0)$ is normal to $\Sigma$ and making use of coordinates adapted to $\Sigma$, the surface integral in (\ref{reghamil3}) becomes
\begin{align}
-\oint_\Sigma&\left(\onabla^j(\xi^0)h_{ij}-\text{tr}_g h\onabla_i(\xi^0)\right)\sqrt{g}dS^i\nonumber\\
&=-\oint_\Sigma\left(g^{j3}\onabla_3(\xi^0)h_{ij}n^i-h_k^k\onabla_3(\xi^0)\right)\sqrt{g}dS\nonumber\\
&=\oint_\Sigma\onabla_3(\xi^0)h_A^A\sqrt{g}dS\nonumber\\
&=2\kappa d\Area_\Sigma\label{dAformula},
\end{align}
where the index `3' refers to the direction normal to $\Sigma$, while $A=1,2$ are tangential.

It can be seen from Proposition \ref{propreghamil7}, that this new Hamiltonian gives the correct equations of motion when $\onabla\xi^0\equiv0$ on $\Sigma$, or when $\Sigma$ is the bifurcation surface of a bifurcate Killing Horizon and $g$ is a critical point of the area functional of $\Sigma$. Since the surface gravity explicitly depends on $g$, there is no obvious way to further modify the Hamiltonian such that the correct equations of motion are generated in general.

To prove the main Theorem, we will need to make use of the following generalisation of the method of Lagrange multipliers to Banach manifolds (see, Theorem 6.3 of \cite{phasespace}).
\begin{theorem}\label{banach}
Suppose $K:B_1\rightarrow B_2$ is a $C^1$ map between Banach manifolds, such that $DK_u:T_uB_1\rightarrow T_{K(u)}B_2$ is surjective, with closed kernel and closed complementary subspace for all $u\in K^{-1}(0)$. Let $f\in C^1(B_1)$ and fix $u\in K^{-1}(0)$, then the following statements are equivalent:
\begin{enumerate}[(i)]
\item For all $v\in\ker DK_u$, we have
\begin{equation}Df_u(v)=0.\end{equation}
\item There is $\lambda\in B_2^*$ such that for all $v\in B_1$,
\begin{equation}Df_u(v)=\left<\lambda,DK_u(v)\right>,\end{equation}
where $\left< \, , \right>$ refers to the natural dual pairing.
\end{enumerate}
\end{theorem}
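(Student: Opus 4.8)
The plan is to prove the two implications separately; the implication $(ii)\Rightarrow(i)$ is immediate, while all the content lies in $(i)\Rightarrow(ii)$, where I would construct the multiplier $\lambda$ explicitly by inverting $DK_u$ on a complement of its kernel.

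For $(ii)\Rightarrow(i)$, suppose $Df_u(v)=\langle\lambda,DK_u(v)\rangle$ for every $v\in T_uB_1$. Restricting to $v\in\ker DK_u$ makes $DK_u(v)=0$, so $Df_u(v)=\langle\lambda,0\rangle=0$, which is exactly $(i)$. For $(i)\Rightarrow(ii)$, write $N:=\ker DK_u$ and let $W$ be the closed complementary subspace provided by hypothesis, so that $T_uB_1=N\oplus W$ algebraically. I would then consider the restriction $DK_u|_W\colon W\to T_{K(u)}B_2$. This is injective, since $N\cap W=\{0\}$, and surjective, since $DK_u$ is surjective and annihilates $N$; hence it is a continuous linear bijection between $W$, which is complete as a closed subspace of the Banach space $T_uB_1$, and the Banach space $T_{K(u)}B_2$.

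The key step is to invoke the bounded inverse theorem, a standard consequence of the open mapping theorem: a continuous linear bijection between Banach spaces has bounded inverse, so $S:=(DK_u|_W)^{-1}\colon T_{K(u)}B_2\to W$ is bounded. I would then set $\lambda:=Df_u\circ S$, which is an element of $(T_{K(u)}B_2)^*$, identified with $B_2^*$, since it is a composition of bounded linear maps. To verify the required identity, decompose an arbitrary $v\in T_uB_1$ as $v=n+w$ with $n\in N$ and $w\in W$. Then $DK_u(v)=DK_u(w)$, so $S(DK_u(v))=w$ and hence $\langle\lambda,DK_u(v)\rangle=Df_u(w)$; on the other hand, hypothesis $(i)$ gives $Df_u(n)=0$, so $Df_u(v)=Df_u(w)$, and the two expressions coincide.

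The main obstacle, and indeed the only place the hypotheses are used in an essential way, is securing boundedness of $S$: this is precisely where completeness of the Banach spaces together with closedness of both $N$ and its complement $W$ enter, through the open mapping theorem. Once the bounded inverse is in hand, the remainder of the argument is linear bookkeeping with the direct-sum decomposition.
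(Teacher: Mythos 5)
Your proof is correct. The paper does not prove this theorem itself --- it quotes it as Theorem 6.3 of Bartnik's phase-space paper --- and your argument (split $T_uB_1=\ker DK_u\oplus W$, invert $DK_u|_W$ via the bounded inverse theorem, set $\lambda=Df_u\circ(DK_u|_W)^{-1}$) is precisely the standard proof of that cited result, with the hypotheses on closedness of the kernel and its complement used exactly where they are needed.
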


We also will need to make use of the following Theorem from Paper I, regarding weak solutions. A weak solution of $D\Phi^*(\xi)=f$ is an element $\xi\in \mathcal{N}$ such that
\begin{equation}\label{weaksoln}
\int_\mathcal{M}\xi\cdot D\Phi(h,b,p,f)=\int_\mathcal{M} f\cdot (h,b,p,f),
\end{equation}
for all $(h,b,p,f)\in\mathcal{G}\times\mathcal{A}\times\mathcal{K}\times\mathcal{E}=T_{(\gpi)}\mathcal{F}$.
\begin{theorem}\label{thmweakstrong}
If $\xi\in \mathcal{N}$ is a weak solution of $D\Phi_{(\gpi)}^*(\xi)=(f_1,f_2,f_3,f_4)$, with $(f_1,f_3,f_4)\in L^2_{-5/2}\times W^{1,2}_{-3/2}\times W^{1,2}_{-3/2}$ and $(\gpi)\in\mathcal{F}$, then $\xi\in W^{2,2}_{-1/2}$ and is indeed a strong solution.
\end{theorem}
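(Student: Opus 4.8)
The plan is to treat this as a weighted elliptic regularity (weak-to-strong) statement for the underdetermined-elliptic operator $D\Phi^*_{(\gpi)}$, exploiting that the system $D\Phi^*(\xi)=f$ is heavily overdetermined: the unknown $\xi=(\xi^0,\xi^i,\xi^a)$ has far fewer components than the target $T^*_{(\gpi)}\mathcal{F}$. The first thing I would do is compute the principal symbol of $D\Phi^*$ as a Douglis--Nirenberg system, treating $\xi^0$ as an unknown of order two and $\xi^i,\xi^a$ as unknowns of order one. Inspecting (\ref{constraint1})--(\ref{gauss}): the $R\sqrt{g}$ term in $\Phi_0$ contributes a second-order scalar operator $\sim(\onabla_i\onabla_j-\og_{ij}\mathring\Delta)\xi^0$ in the metric slot; the divergence $2\onabla^j\pi_{ij}$ in $\Phi_i$ contributes the deformation (Killing) operator $\xi^i\mapsto\onabla_{(i}\xi_{j)}$ in the momentum slot; and $-\onabla_j\varepsilon^j_a$ in $\Phi_a$ contributes the gradient $\xi^a\mapsto\onabla_i\xi^a$ in the electric-field slot. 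Each of these three symbols is injective for $\zeta\neq0$, while every remaining contribution (the Yang--Mills terms, the $B$-field term, the $\pi,\varepsilon$-quadratic terms, and all gauge couplings) is of strictly lower order with coefficients built from $g,A,\pi,\varepsilon$. This is exactly why the hypothesis only constrains $(f_1,f_3,f_4)\in L^2_{-5/2}\times W^{1,2}_{-3/2}\times W^{1,2}_{-3/2}$: these three slots already carry an injective symbol for the full triple $(\xi^0,\xi^i,\xi^a)$, so the $A$-slot equation $f_2$ --- which is only first order in $\xi$ --- is redundant for the regularity argument and requires no a priori assumption.

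Having isolated the elliptic subsystem $P:(\xi^0,\xi^i,\xi^a)\mapsto(f_1,f_3,f_4)$ with injective principal symbol, I would establish interior regularity by a standard argument for overdetermined elliptic systems: pass to the determined elliptic operator $P^*P$, for which the weak equation (\ref{weaksoln}) gives $P^*P\,\xi=P^*(f_1,f_3,f_4)$, and apply difference-quotient and G{\aa}rding estimates to bootstrap $\xi\in L^2_{loc}$ up to $\xi\in W^{2,2}_{loc}$. The gain matches the orders and the stated weights: the second-order equation $f_1\in L^2_{-5/2}$ yields $\xi^0\in W^{2,2}_{-1/2}$ (lowering the weight by two), while the first-order equations $f_3,f_4\in W^{1,2}_{-3/2}$ each yield a one-derivative gain to $\xi^i,\xi^a\in W^{2,2}_{-1/2}$ (lowering the weight by one). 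A point to check carefully is the low regularity of the coefficients: the principal coefficients are algebraic in $g$, which by Morrey embedding lies in $C^{0,1/2}(\Mo)$ since $(g-\og)\in W^{2,2}_{-1/2}$, so the leading part has continuous coefficients, whereas the lower-order coefficients built from $\pi,\varepsilon\in W^{1,2}$ and from $A$ are handled by Sobolev multiplication and absorbed as errors that do not disturb ellipticity.

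To globalise, I would invoke the weighted elliptic theory on asymptotically flat manifolds (Cantor, Nirenberg--Walker, McOwen, and Bartnik \cite{phasespace}). Near each end $g\to\og$ and $\pi,\varepsilon\to0$, so $D\Phi^*$ is asymptotic to the constant-coefficient flat model whose components are the Euclidean operators $(\onabla_i\onabla_j-\og_{ij}\mathring\Delta)\xi^0$, $\onabla_{(i}\xi_{j)}$ and $\onabla_i\xi^a$. For $n=3$ the weight $\delta=-\tfrac12$ lies in the non-exceptional interval $(2-n,0)=(-1,0)$ for each of these model operators, so the weighted a priori estimate $\|\xi\|_{2,2,-1/2}\le c(\|f_1\|_{2,-5/2}+\|f_3\|_{1,2,-3/2}+\|f_4\|_{1,2,-3/2}+\|\xi\|_{2,-1/2})$ holds. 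Combining this with the interior regularity upgrades the weak solution $\xi\in\mathcal{N}=L^2_{-1/2}$ to $\xi\in W^{2,2}_{-1/2}$; integrating the weak identity (\ref{weaksoln}) by parts against this now-regular $\xi$ then shows it is a genuine (strong) solution.

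I expect the main obstacle to be two intertwined technical points rather than the overall strategy. First, one must verify rigorously that $D\Phi^*$ is elliptic in the Douglis--Nirenberg sense with the stated order assignment and that the coupled Einstein--Yang--Mills lower-order terms genuinely stay below the leading order --- the bookkeeping across the four slots, in particular confirming that the $B$-field and gauge-coupling terms are subprincipal, is delicate. Second, the borderline coefficient regularity ($g\in W^{2,2}$, $\pi,\varepsilon\in W^{1,2}$ in three dimensions) sits exactly at the threshold where the $L^2$-based elliptic estimates apply, so care is needed to keep every product term in the correct weighted Sobolev space and to ensure that the weight $-\tfrac12$ simultaneously avoids the indicial roots of all three model operators.
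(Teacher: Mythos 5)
You should first note that this paper does not actually prove Theorem \ref{thmweakstrong}: it is imported verbatim from Paper I \cite{Me}, whose argument is in turn modelled on Bartnik's Proposition 3.5 in \cite{phasespace}. Measured against that source, your proposal is the right strategy in outline and gets the key structural points correct: the regularity lives entirely in the $(g,\pi,\varepsilon)$ slots, where the adjoint contributes the Hessian-trace operator $\onabla_i\onabla_j\xi^0-g_{ij}\Delta\xi^0$, the Killing operator $\onabla_{(i}\xi_{j)}$, and the (gauge-covariant) gradient $\onabla_i\xi^a$, which is exactly why $f_2$ carries no hypothesis; and the weight $-1/2$ is non-exceptional for all three asymptotic model operators. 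Where you genuinely diverge is in the execution: the actual proof does not form the composite operator $P^*P$. It exploits the Douglis--Nirenberg triangular structure directly and bootstraps slot by slot --- the $\pi$-equation is zeroth order in $\xi^0$ and so is a Killing-operator equation upgrading $\xi^i$, the $\varepsilon$-equation upgrades $\xi^a$, and the $g$-equation then upgrades $\xi^0$ --- combined with the scale-broken weighted a priori estimate $\|\xi\|_{2,2,-1/2}\leq c(\|D\Phi^*\xi\|+\|\xi\|_{2,-1/2})$ and a mollification/commutator argument to pass from distributional to strong derivatives. This distinction matters precisely at the coefficient regularity you flag as the main obstacle: $P^*P$ is a fourth-order operator in $\xi^0$ whose coefficients involve an extra derivative of $\pi,\varepsilon\in W^{1,2}$ and of the curvature terms, and the cross-term $\mathcal{L}_{\xi^i}\pi$ (which sits at principal DN order in the $g$-row, not below it, as your ``block-diagonal'' phrasing suggests) becomes very awkward after composition when $\xi$ is a priori only in $L^2_{\mathrm{loc}}$. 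Your route is cleaner for smooth coefficients and makes the overdetermined-ellipticity transparent; the direct triangular bootstrap is what actually closes at $H^2\times H^2\times H^1\times H^1$ data, so if you pursue your version you would need to justify carefully that $P^*P$ makes distributional sense and that its rough lower-order part can be absorbed as a relatively compact perturbation.
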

The following theorem from $\cite{phasespace}$ is stated in reference to a particular operator, however it is clear from the proof that the theorem applies to a general class of operators. In particular, the theorem could more generally be stated as follows:
\begin{theorem}[Theorem 3.6 of \cite{phasespace}]\label{thmtrivker}
Let $\Omega\subset\mathcal{M}$ be a connected domain with $E'_R\subset\Omega$ for some $R$, where $E'_R$ is a connected component of $E_R$. If $\xi\in W^{2,2}_{-1/2}$ satisfies
\begin{equation*}
\onabla^2\xi=b_1\nabla\xi+b_0\xi,
\end{equation*}
with $b_0\in L^2_{-5/2}$ and $b_1\in W^{1,2}_{-3/2}$, then $\xi\equiv 0$ in $\Omega$.
\end{theorem}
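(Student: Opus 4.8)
The plan is to exploit the decisive structural feature that the equation controls the full Hessian $\onabla^2\xi$, not merely its trace. Packaging $W:=(\xi,\onabla\xi)$, the hypothesis becomes a first-order system $\onabla W=\mathcal{A}\,W$, where the bundle endomorphism $\mathcal{A}$ is assembled from $b_0$ and $b_1$; equivalently, $W$ is parallel for the modified connection $D:=\onabla-\mathcal{A}$. Such a $W$ is rigid: along any curve it satisfies a linear transport ODE, so it is determined by its value at a single point, and a solution whose Cauchy pair $(\xi,\onabla\xi)$ vanishes on an open set must vanish on the whole connected region. The proof therefore splits into forcing $\xi$ to vanish near infinity and then propagating that vanishing throughout $\Omega$.

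First I would show $\xi\equiv0$ on a neighbourhood of infinity inside $E'_R$. There $\og$ is exactly Euclidean, so $\mathcal{A}$ acts along radial rays, and the hypotheses $b_1\in W^{1,2}_{-3/2}$, $b_0\in L^2_{-5/2}$ make it short-range: the pointwise rates $|b_1|=o(r^{-3/2})$, $|b_0|=o(r^{-5/2})$ are integrable against $dr$. Since $W\in W^{1,2}_{loc}$, a Fubini argument in polar coordinates shows that for almost every direction $\omega$ the restriction $r\mapsto W(r\omega)$ is absolutely continuous, tends to $0$ as $r\to\infty$ by the weighted decay of $\xi$ and $\onabla\xi$, and solves the radial ODE $\partial_r W=\mathcal{A}_\omega W$ with $\int_R^\infty\|\mathcal{A}_\omega\|\,dr<\infty$. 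The uniform bound on the transport operator, $\|\Phi_\omega(s,r)\|\le\exp\!\big(\int_R^\infty\|\mathcal{A}_\omega\|\,dr\big)<\infty$, then forces $W(s\omega)=\lim_{r\to\infty}\Phi_\omega(s,r)W(r\omega)=0$ for almost every $\omega$, so $\xi\equiv0$ on $E'_R$ (enlarging $R$ if necessary).

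It remains to propagate the vanishing from the open set $E'_R$ into the connected domain $\Omega$. Taking the $\og$-trace of the Hessian equation presents $\xi$ as a solution of a second-order elliptic system with smooth, elliptic diagonal principal part $\mathring\Delta$ and lower-order coefficients lying, in the interior, in $L^6_{loc}$ for the gradient term and $L^2_{loc}$ for the zeroth-order term. In dimension three these sit strictly inside the scaling classes $L^n_{loc}$ and $L^{n/2}_{loc}$, so the strong unique continuation principle applies, either directly to the first-order system or via the $L^p$-coefficient results of Jerison--Kenig and Koch--Tataru for the traced equation. As $\xi$ vanishes on the nonempty open set $E'_R$ and $\Omega$ is connected, this yields $\xi\equiv0$ on $\Omega$.

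The heart of the matter is the first step: a decaying elliptic solution need not vanish in general, as Meshkov-type examples for merely bounded potentials show, so the conclusion genuinely exploits both that the coefficients are short-range, decaying faster than the critical rate $r^{-2}$, which makes the ray-transport integrals converge, and that the equation constrains the entire Hessian rather than the Laplacian alone, which removes the indicial-root (monopole $1/r$) freedom that would otherwise stall a purely Laplacian decay bootstrap. The technical care needed is to run the transport argument at the borderline regularity $\onabla\xi\in W^{1,2}_{loc}$ via the almost-every-ray Fubini device rather than a naive pointwise ODE, and to confirm that the invoked unique continuation theorem tolerates the $L^p$ coefficient classes at hand.
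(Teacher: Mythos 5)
First, note that the paper itself does not prove this statement: it is imported verbatim (up to generalising the operator) from Theorem 3.6 of Bartnik's phase-space paper, so the comparison is with that argument rather than with anything in the present text. Your first half is essentially the standard one: the decisive observation is that the equation controls the \emph{full} Hessian, so $W=(\xi,\onabla\xi)$ is parallel for a modified connection and obeys a linear transport ODE along curves, and vanishing near infinity follows from an a.e.-ray Gronwall argument. Two small repairs there: in three dimensions $b_1\in W^{1,2}_{-3/2}$ does \emph{not} give the pointwise rate $|b_1|=o(r^{-3/2})$ (only $L^6_{-3/2}$ by Sobolev embedding), so the integrability $\int_R^\infty|\mathcal{A}_\omega|\,dr<\infty$ for a.e.\ $\omega$ must be extracted by Cauchy--Schwarz, e.g.\ $\int_{S^2}\int_R^\infty|b_1|\,dr\,d\omega\lesssim\|b_1\|_{2,-3/2}\|r^{-2}\|_{L^2(E_R)}<\infty$; likewise $W(r\omega)\to0$ holds only along a subsequence of radii for a.e.\ $\omega$, which together with the uniform bound on the transition matrices is still enough. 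Your Fubini device is the right framework for both, so this is presentational rather than a real gap.

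The genuine divergence is in the propagation step. Having built the transport structure, you then discard it: you take the trace of the Hessian equation and invoke unique continuation for $\Delta u=B\cdot\nabla u+Cu$ with $B\in L^6_{loc}$, $C\in L^2_{loc}$. This is the wrong trade: tracing throws away precisely the feature (control of the whole Hessian, not just its trace) that makes the theorem elementary, and replaces an ODE uniqueness statement with Carleman-estimate technology whose applicability to vector-valued systems with $L^p$ drift at these exponents is delicate and would need careful verification --- only weak unique continuation from an open set is required, which helps, but it remains a heavy and fragile citation. The natural finish, and the one consistent with the quoted source, is to run the same rigidity argument in the interior: $|W|$ satisfies $|\onabla|W||\le|\mathcal{A}|\,|W|$ weakly with $|\mathcal{A}|\in L^2_{loc}$, and vanishing on the open set $E'_R$ propagates to all of the connected $\Omega$ by, for instance, applying the Poincar\'e inequality to $\log(|W|+\epsilon)-\log\epsilon$ and letting $\epsilon\to0$, or by a mollified Gronwall argument along paths. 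This uses nothing beyond $\mathcal{A}\in L^2_{loc}$ and no elliptic theory. Your proof is probably completable as written, but the second half should be replaced by the first-order argument you already set up in your opening paragraph.
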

From this and Theorem \ref{thmweakstrong}, we have the following immediate corollary:
\begin{corollary}\label{cortrivker}
Let $(\gpi)\in\mathcal{F}$. If $\xi\in \mathcal{N}^*$ satisfies $D\Phi_{(\gpi)}^*(\xi)=0$ on a connected $\Omega\subset\mathcal{M}$ containing some $E_R'$, then $\xi\equiv0$ on $\Omega$.
\end{corollary}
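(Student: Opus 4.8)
The plan is to chain together the two results just quoted. First I would use Theorem~\ref{thmweakstrong} to upgrade $\xi$ from a weak solution to a strong $W^{2,2}_{-1/2}$ solution, and then recast the strong equation $D\Phi^*_{(\gpi)}(\xi)=0$ into exactly the form $\onabla^2\xi=b_1\nabla\xi+b_0\xi$ to which Theorem~\ref{thmtrivker} applies. The conclusion $\xi\equiv0$ on $\Omega$ is then immediate.

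First I would read the hypothesis as saying that $\xi$ is a weak solution of $D\Phi^*_{(\gpi)}(\xi)=(0,0,0,0)$ on $\Omega$ in the sense of (\ref{weaksoln}), testing only against fields supported in $\Omega$. The zero source lies trivially in $L^2_{-5/2}\times W^{1,2}_{-3/2}\times W^{1,2}_{-3/2}$, so the interior and weighted-end regularity underlying Theorem~\ref{thmweakstrong} promotes $\xi$ to $W^{2,2}_{-1/2}$ on $\Omega$, with the equation now holding strongly. This is the only role of the regularity theorem; it is what licenses treating $\onabla^2\xi\in L^2_{-5/2}$ as an honest object in the next step.

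Next I would bring the strong equation into scalar-Laplacian shape. The block of $D\Phi^*$ dual to the metric variation $h$ is genuinely second order in $\xi^0$, its principal part being the $\og$-Hessian and Laplacian up to the factor $g^{-1}-\og^{-1}$, which is small in $W^{2,2}_{-1/2}$; the blocks dual to $p$, $b$ and $f$ are only first order, expressing Killing-type and gauge-covariant-constancy relations for $\xi^i$ and $\xi^a$. I would differentiate these first-order relations once and feed them back, so that every component of the tuple $\xi$ acquires a flat-Laplacian principal part, and collect everything that remains into $b_1\nabla\xi+b_0\xi$. The coefficients $b_0,b_1$ are then algebraic and first-order expressions in $(g,A,\pi,\varepsilon)$ and their $\onabla$-derivatives, to be estimated by the same weighted multiplication and embedding inequalities already used in the proof of Theorem~\ref{Psmooth}.

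The main obstacle is the weighted bookkeeping in this last reduction: confirming that $b_1\in W^{1,2}_{-3/2}$ and $b_0\in L^2_{-5/2}$ as Theorem~\ref{thmtrivker} demands, and that the reduction really produces a second-order equation for the whole of $\xi$ rather than for $\xi^0$ alone. The delicate case is the Yang--Mills contribution. An undifferentiated $A$ enters the coefficients only through the structure constants, so only $A_{\lie{k}}\in W^{2,2}_{-3/2}$ survives, landing in $b_1$ at decay $-3/2$; a differentiated $A$ instead contributes $\onabla A_{\lie{k}}\in W^{1,2}_{-5/2}$ to $b_0$ at decay $-5/2$. Both are precisely what the theorem requires, and this is exactly why the splitting $\lie{g}=\lie{z}\oplus\lie{k}$ and the faster decay imposed on $A_{\lie{k}}$ were built into $\mathcal{F}$. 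With the equation in the required form on the connected set $\Omega\supset E'_R$, Theorem~\ref{thmtrivker} then yields $\xi\equiv0$ on $\Omega$.
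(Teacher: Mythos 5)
Your proposal is correct and follows exactly the route the paper intends: the paper presents this corollary as an immediate consequence of chaining Theorem~\ref{thmweakstrong} (weak-to-strong regularity, giving $\xi\in W^{2,2}_{-1/2}$) with Theorem~\ref{thmtrivker} (vanishing for the reduced second-order system $\onabla^2\xi=b_1\nabla\xi+b_0\xi$), which is precisely your argument. Your elaboration of the reduction step and the weighted bookkeeping for the coefficients $b_0,b_1$ simply fills in detail the paper delegates to the remark that Theorem 3.6 of \cite{phasespace} ``applies to a general class of operators.''
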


Now we are in a position to prove the main theorem. Below, we use the notation $D\Phi^*_{(\gpi)}(\xi)=(D\Phi^*_g(\xi),D\Phi^*_A(\xi),D\Phi^*_\pi(\xi),D\Phi^*_\varepsilon(\xi))$ to identify the components of $D\Phi^*$.
\begin{theorem}\label{main2}
Let $(g,A,\pi,\varepsilon)\in\mathcal{C}(s)$, where $s\in L^1$, and suppose there exists a vector field, $\phi\in W^{2,2}_{\text{loc}}$, tangent to $\Sigma$ with $D\Phi^*_\pi(\phi),D\Phi^*_\varepsilon(\phi)\in W^{1,2}_{-1/2\, c}(\mathcal{M}_0)$. Further suppose that for all $(h,b,p,f)\in T_{(\gpi)}\mathcal{C}(s)$,
\begin{align}
Dm_{(\gpi)}&(h,b,p,f)=\alpha D\Area_{\Sigma\,(\gpi)}(h,b,p,f)+\beta DJ_{\phi \, (\gpi)}(h,b,p,f)\nonumber\\
&+\gamma_\Sigma\cdot DQ_{\Sigma \, (\gpi)}(h,b,p,f)-\gamma_\infty\cdot DQ_{\infty \, (\gpi)}(h,b,p,f)\label{firstlawthm},
\end{align}
where $\alpha,\beta\in\mathbb{R}$ and $\gamma_{\Sigma},\gamma_{\infty}\in\lie{g}$ are constants. Then $(\gpi)$ is a generalised stationary initial data set. Furthermore, $\gamma$ is the electric potential, and if $\Sigma$ is the bifurcation surface of a bifurcate Killing horizon, then $8\pi\alpha$ is the surface gravity and $\beta$ is the angular velocity.
\end{theorem}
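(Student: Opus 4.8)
The plan is to read the hypothesis (\ref{firstlawthm}) as the single statement that the functional
\[
F:=m-\alpha\,\Area_\Sigma-\beta\,\tilde J_\phi-\gamma_\Sigma\cdot Q_\Sigma+\gamma_\infty\cdot Q_\infty
\]
is critical on the constraint manifold $\mathcal{C}(s)$ at $(\gpi)$, and to produce from this criticality a Lagrange multiplier that will serve as the generalised stationary field. By Theorem \ref{Psmooth} and Lemma \ref{LemQJ} each term is smooth on $\mathcal{C}(s)$ (with $m=\sqrt{-\mathbb{P}^\mu\mathbb{P}_\mu}$ smooth wherever $\mathbb{P}$ is timelike, as guaranteed by the positive mass theorem), so $F\in C^1(\mathcal{C}(s))$, and the hypothesis is exactly $DF_{(\gpi)}(v)=0$ for every $v\in T_{(\gpi)}\mathcal{C}(s)=\ker D\Phi_{(\gpi)}$. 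The only wrinkle is that the mass, charge and angular-momentum pieces of $F$ are defined through surface integrals that are finite on $\mathcal{C}(s)$ but not on all of $\mathcal{F}$; this is precisely what the regularised Hamiltonian (\ref{reghamil}) remedies, since it is smooth on $\mathcal{F}$ and agrees on $\mathcal{C}(s)$, up to the fixed constant $\int_{\Mo}\xi\cdot s$, with $16\pi$ times the boundary functional $\xi_\infty\cdot\mathbb{P}+\tilde J_\xi-\xi_\Sigma^a Q_{\Sigma\,a}$. As $\Area_\Sigma$ is manifestly smooth on $\mathcal{F}$, I thus have a smooth representative of $F$ to which the abstract theorem applies.

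First I would apply the Banach-manifold Lagrange multiplier theorem (Theorem \ref{banach}) with $K=\Phi-s$ and $f$ this representative. The required hypotheses — that $D\Phi_{(\gpi)}$ is surjective with closed kernel and closed complement — are exactly the properties established in Paper I that make $\mathcal{C}(s)$ a Hilbert submanifold, so they are available. The theorem returns a multiplier in the dual of the target $\mathcal{N}^*=L^2_{-5/2}$; under the weighted $L^2$ pairing this dual is $\mathcal{N}=L^2_{-1/2}$, so I obtain $\xi\in\mathcal{N}$ with
\[
DF_{(\gpi)}(v)=\int_{\Mo}\xi\cdot D\Phi_{(\gpi)}(v)\qquad\text{for all }v\in T_{(\gpi)}\mathcal{F}.
\]
This $\xi$ is the candidate field, and the remaining work is to show it is a genuine strong solution of $D\Phi^*(\xi)=0$ with the right behaviour at infinity.

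Next I would bootstrap regularity and extract the boundary data. Every term of $F$ is a surface integral at infinity or on $\Sigma$, so for $v$ compactly supported in the interior of $\Mo$ the left-hand side vanishes and $\xi$ is a weak solution of $D\Phi^*_{(\gpi)}(\xi)=0$ in the sense of (\ref{weaksoln}); Theorem \ref{thmweakstrong} then upgrades $\xi$ to $W^{2,2}_{-1/2}$ and to a strong solution throughout $\Mo$. Allowing $v$ to reach the two boundaries and integrating $\int\xi\cdot D\Phi(v)$ by parts, the resulting surface integrals must match those in $DF$ term by term: the identity $Dm=-\,m^{-1}\mathbb{P}^\mu D\mathbb{P}_\mu$ forces $\xi^\mu_\infty=\mathbb{P}^\mu/m$, a unit timelike translation at infinity; the charge terms force $\xi^a_\infty=\gamma_\infty$ and $\xi^a_\Sigma=\gamma_\Sigma$; and the angular-momentum term (\ref{Jdef}) forces $\xi^i_\Sigma=\beta\phi^i$, with $\xi^0_\Sigma=0$. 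Here the assumed regularity of $\phi$ — $\phi\in W^{2,2}_{\text{loc}}$ with $D\Phi^*_\pi(\phi),D\Phi^*_\varepsilon(\phi)\in W^{1,2}_{-1/2\,c}(\Mo)$ — is exactly what guarantees that the reference field carrying these values lies in $W^{2,2}_{\xiref}$ and that the weak-to-strong step applies with the shift part included. With $D\Phi^*(\xi)=0$ and $\xi^\mu_\infty$ a time translation, $(\gpi)$ is a generalised stationary initial data set, which is the main assertion.

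Finally, for the identification of the constants I would feed this $\xi$ into Proposition \ref{propreghamil7}: since $D\Phi^*(\xi)=0$, the variation of the regularised Hamiltonian collapses to the single boundary integral $-\oint_\Sigma(\onabla^j(\xi^0)h_{ij}-\tr_g h\,\onabla_i(\xi^0))\sqrt g\,dS^i$. Comparing the residual boundary terms of the Lagrange identity with those of $DF$ identifies $\gamma$ with the electric potential $V$ through the charge terms; and when $\Sigma$ is the bifurcation surface of a bifurcate Killing horizon — so that $\xi^0=0$ on $\Sigma$, $\onabla\xi^0$ is normal and $\kappa=\tfrac12 n^i\onabla_i(\xi^0)$ is constant — the computation (\ref{dAformula}) rewrites that integral as $2\kappa\,D\Area_\Sigma$, and matching against the area coefficient in $F$ gives $8\pi\alpha=\kappa$, while $\xi^i_\Sigma=\beta\phi^i$ together with $\xi^\mu+\Omega\phi^\mu\equiv0$ on $\Sigma$ gives $\beta=\Omega$. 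The step I expect to be the main obstacle is the third one: converting the abstract $L^2_{-1/2}$ multiplier into a strong solution possessing well-defined traces on the artificial boundary $\Sigma$, and justifying the boundary integration by parts there so that the data of $\xi$ can be read off cleanly. The interior elliptic regularity via Theorem \ref{thmweakstrong} is routine, but controlling the Sobolev traces at $\Sigma$ and marrying them to the prescribed angular-momentum shift $\beta\phi$ — which is exactly why the delicate hypotheses on $\phi$ are imposed — is where the real care is needed.
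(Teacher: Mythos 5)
Your overall strategy---recasting the hypothesis as criticality of a functional on $\mathcal{C}(s)$, replacing it by the regularised Hamiltonian minus $16\pi\alpha\Area_\Sigma$ so that Theorem \ref{banach} applies, and then upgrading the resulting $L^2_{-1/2}$ multiplier to a strong solution of $D\Phi^*=0$---is the same as the paper's. But there is a genuine gap at exactly the step you flag as ``the main obstacle,'' and your proposed route through it would not work. You want the Lagrange multiplier itself to be the stationary field, so you restrict to test variations compactly supported in the interior of $\Mo$, invoke Theorem \ref{thmweakstrong}, and then recover the boundary and asymptotic data of the multiplier a posteriori by matching surface terms at $\Sigma$ and at infinity. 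Theorem \ref{thmweakstrong}, however, is a statement about weak solutions in the sense of (\ref{weaksoln}), i.e.\ an integral identity holding over all of $\mathcal{M}$ for \emph{every} $(h,b,p,f)\in T\mathcal{F}$; it is not an interior regularity statement on the manifold-with-boundary $\Mo$, and it does not produce traces on the artificial boundary $\Sigma$ that you could then match against $DJ_\phi$, $DQ_\Sigma$, etc. The trace-matching step is left entirely unjustified, and the identification $\xi^i_\Sigma=\beta\phi^i$, $\xi^a_\Sigma=\gamma_\Sigma$ cannot be ``read off'' this way.

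The paper closes this gap by inverting the logic: the boundary and asymptotic data are \emph{prescribed in advance} on the reference section $\xiref$ ($\xi^\mu_\infty\propto\mathbb{P}^\mu$ so that $\xi_\infty\cdot D\mathbb{P}=Dm$; $\xiref^i=-\beta\phi^i$ and $\xiref^a=\gamma^a$ on $\Sigma$; $\xiref^0=0$ with $\partial_i(\xiref^0)\tilde n^i=16\pi\alpha$ on $\Sigma$, which makes the $\Sigma$-boundary term of Proposition \ref{propreghamil7} cancel the $-16\pi\alpha D\Area_\Sigma$ term exactly via (\ref{dAformula})). The multiplier $\lambda$ then satisfies the \emph{global} weak equation $\int_{\mathcal{M}}\psi\cdot(h,b,p,f)=\int_{\mathcal{M}}D\Phi(h,b,p,f)\cdot\lambda$, where $\psi$ is $-D\Phi^*(\xi)$ extended by zero outside $\Mo$---and the hypothesis $D\Phi^*_\pi(\phi),D\Phi^*_\varepsilon(\phi)\in W^{1,2}_{-1/2\,c}(\Mo)$ is what guarantees this extension keeps the regularity needed for Theorem \ref{thmweakstrong}. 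That theorem then gives $\lambda\in W^{2,2}_{-1/2}(\mathcal{M})$ and $D\Phi^*(\lambda)=\psi$ strongly, so the stationary field is $\tilde\xi=\xi+\lambda$, not $\lambda$ alone; its boundary values on $\Sigma$ are those of $\xiref$ because Corollary \ref{cortrivker} forces $\lambda\equiv0$ on $\mathcal{M}\setminus\Mo$. Without this extension-by-zero device and the a priori choice of $\xiref$, your argument does not produce a strong solution with identifiable data on $\Sigma$, which is the heart of the theorem.
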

\begin{proof}
Assume (\ref{firstlawthm}) holds at some fixed point $\tilde{G}=(\tilde{g},\tilde{A},\tilde{\pi},\tilde{\varepsilon})\in\mathcal{F}$. Then fix $\xiref$ such that it satisfies the following boundary and asymptotic conditions:
\begin{itemize}
\item $\xi^\mu_{\infty}$ corresponds to a future pointing unit vector at spatial infinity in the spacetime that is proportional to $\mathbb{P}^\mu$,
\item $\xiref^a$ is constant at infinity and on $\Sigma$, with values $\xi_\infty^a=\gamma^a_\infty$ and $\xi_\Sigma^a=\gamma_\infty^a$,
\item $\xiref^0$ vanishes on $\Sigma$,
\item $\xiref^i=-\beta\phi^i$ on $\Sigma$,
\item $\partial_i(\xiref^0)\tilde{n}^i=16\pi\alpha$ on $\Sigma$.
\end{itemize}
We use $\tilde{n}$ to denote the unit normal with respect to $\tilde{g}$, pointing towards infinity in $M_0$. Note that the condition on $\xi^\mu_\infty$ implies $\xi^\mu_{\infty}\mathbb{P}_\mu=m$, and the conditions on $\alpha$, $\beta$ and $\gamma$ ensure that they correspond to the appropriate physical quantities in the statement of the Theorem.

Now for some $\xi\in W^{2,2}_{\xiref}$, define
\begin{equation}
\tilde{f}(G):=\mathcal{H}^\xi(G)-16\pi\alpha \Area_\Sigma(G),
\end{equation}
where $G=(\gpi)\in\mathcal{F}$. We again let $K(G)=\Phi(G)-s$, and note that for all constrained variations, $(h,b,p,f)\in \ker(DK_{\tilde{G}})=T_{\tilde{G}}\mathcal{C}(s)$, we have (see \ref{RTHam})
\begin{align*}
D\mathcal{H}^\xi_{\tilde{G}}(h,b,p,f)=&16\pi(\xi_\infty\cdot D\mathbb{P}_{\tilde{G}}(h,b,p,f)+D\tilde{J}^{\xi}_{\tilde{G}}(h,b,p,f)-\xi_\Sigma^a DQ_{\Sigma \,{\tilde{G}}\,a}(h,b,p,f))\\
=&Dm_{{\tilde{G}}}(h,b,p,f)-\beta DJ_{\phi \, {\tilde{G}}}(h,b,p,f)\nonumber\\
&-\gamma_\Sigma\cdot DQ_{\Sigma \, {\tilde{G}}}(h,b,p,f)+\gamma_\infty\cdot DQ_{\infty \, {\tilde{G}}}(h,b,p,f).
\end{align*}
By hypothesis (\ref{firstlawthm}), we have $D\tilde{f}_{\tilde{G}}(h,b,p,f)=0$ for all $(h,b,p,f)\in \ker(DK_{\tilde{G}})$. It follows from Theorem \ref{banach}, that there exists $\lambda\in\mathcal{N}$ such that
\begin{equation}
D\tilde{f}_{\tilde{G}}=\left<D\Phi_{\tilde{G}},\lambda\right>;
\end{equation}
that is,
\begin{equation}
D\tilde{f}_{\tilde{G}}(h,b,p,f)=\int_\mathcal{M}D\Phi_{\tilde{G}}(h,b,p,f)\cdot\lambda,
\end{equation}
for all $(h,b,p,f)\in T_{\tilde{G}}\mathcal{F}$. However, from Proposition \ref{propreghamil7}, we have
\begin{align}
D\tilde{f}_{\tilde{G}}(h,b,p,f)=&-\oint_\Sigma(\onabla^j(\xi^0)h_{ij}-\text{tr}_g h\onabla_i(\xi^0))\sqrt{g}dS^i\\
&-\int_{\mathcal{M}_0}D\Phi^*(\xi)\cdot(h,b,p,f)-16\pi\alpha D\Area_{\Sigma\, (\tilde{G})}(h,b,p,f)\nonumber.
\end{align}
As $\partial_i(\xi^0)\tilde{n}^i=16\pi\alpha$ on $\Sigma$, the first and last terms cancel exactly (see (\ref{dAformula})), leaving
\begin{equation}
D\tilde{f}_{\tilde{G}}(h,b,p,f)=-\int_{\Mo}(h,b,p,f)\cdot D\Phi^*_G(\xi);
\end{equation}
that is,
\begin{equation}
-\int_{\Mo}(h,b,p,f)\cdot D\Phi^*_G(\xi)=\int_\mathcal{M}D\Phi_{\tilde{G}}(h,b,p,f)\cdot\lambda\label{bhghghg},
\end{equation}
for all $(h,b,p,f)\in T_{(\tilde{G})}\mathcal{F}$.

Since the first integral in (\ref{bhghghg}) is over $\Mo$, rather than $\mathcal{M}$, Theorem \ref{thmweakstrong} does not directly apply. Instead we extend $D\Phi_{\tilde{G}}^*(\xi)$ by zero, noting that the hypotheses on $D\Phi_{\tilde{G}}^*(\phi)$ ensure that we can do this without losing regularity.

Define the function
\begin{equation}
\psi=(\psi_1,\psi_2,\psi_3,\psi_4):=\left\{\begin{matrix}-D\Phi_{\tilde{G}}^*(\xi) & \text{on }\mathcal{M}_0\\ 0 & \text{otherwise}\end{matrix}\right. .
\end{equation}
We then have
\begin{equation}
\int_{\mathcal{M}}\psi\cdot(h,b,p,f)=\int_\mathcal{M}D\Phi_{\tilde{G}}(h,b,p,f)\cdot\lambda
\end{equation}
for all $(h,b,p,f)\in T_{\tilde{G}}\mathcal{F}$. It is straightforward to check that $\psi_1\in L^2_{-5/2}(\mathcal{M})$ and $\psi_3,\psi_4\in W^{1,2}_{-3/2}(\mathcal{M})$ (see Lemma 6.5 of \cite{MyThesis} for details), and therefore Theorem \ref{thmweakstrong} gives $\lambda\in W^{2,2}_{-1/2}(\mathcal{M})$ and $D\Phi_{\tilde{G}}^*(\lambda)=\psi$ in the strong sense. It then follows that $D\Phi^*_{\tilde{G}}(\tilde{\xi})=0$ on $\mathcal{M}_0$, where $\tilde{\xi}:=\xi+\lambda$ is the generalised stationary Killing vector.

\end{proof}

Note that we have $D\Phi^*_{\tilde{G}}(\lambda)=0$ on $\mathcal{M}\setminus\Mo$, so Corollary \ref{cortrivker} implies $\lambda=0$ on $\mathcal{M}\setminus\Mo$. It then follows that $\tilde{\xi}=\xi=-\beta\phi$ on $\Sigma$, and in particular we have that $\tilde{\xi}+\beta\phi^i$ vanishes on $\Sigma$. It is interesting to note that while we do not assume that $\Sigma$ is a horizon in the above theorem, the conclusion that $\tilde{\xi}^\mu+\beta\phi^i$ vanishes on $\Sigma$ gives us the following corollary.
\begin{corollary}
If the hypotheses of Theorem \ref{main2} hold and $(\gpi)$ is axially symmetric with axial Killing field, $\phi$, then $\Sigma$ is the bifurcation surface of a bifurcate Killing horizon, where $8\pi\alpha$ is the surface gravity and $\beta$ is the angular velocity.
\end{corollary}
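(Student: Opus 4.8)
The plan is to pass to the maximal development of the data and exhibit a single spacetime Killing field that vanishes on $\Sigma$, after which the bifurcate horizon structure is classical. First I would unpack what ``generalised stationary'' buys us: by Theorem \ref{main2} we have $D\Phi^*_{\tilde{G}}(\tilde{\xi})=0$ on $\Mo$, so Moncrief's theorem \cite{Moncrief1,Moncrief2} produces a development $\fm$ admitting a Killing field, which I denote ${}^4\tilde{\xi}$, whose ADM decomposition along $\Mo$ is the tuple $\tilde{\xi}=(\tilde{\xi}^0,\tilde{\xi}^i,\tilde{\xi}^a)$ with $\tilde{\xi}^0$ the lapse and $\tilde{\xi}^i$ the shift. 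The remark following Theorem \ref{main2} already records that $\tilde{\xi}=\xiref$ on $\Sigma$; in particular $\tilde{\xi}^0=0$ and $\tilde{\xi}^i=-\beta\phi^i$ there.

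The axial symmetry hypothesis is precisely what supplies a second Killing field. Since $(\gpi)$ is axially symmetric with axial Killing field $\phi$, the purely spatial vector $\phi$ (zero lapse, shift $\phi^i$) satisfies the Killing initial data equations, so the symmetry is inherited by $\fm$ and $\phi$ extends to a spacetime Killing field ${}^4\phi$ tangent to the slices; in particular ${}^4\phi$ is tangent to $\mathcal{M}$ and to $\Sigma$. I would then form
\begin{equation*}
{}^4\chi:={}^4\tilde{\xi}+\beta\,{}^4\phi,
\end{equation*}
which, as a constant-coefficient combination of Killing fields, is again a Killing field of $\fm$. On $\Sigma$ its normal part $\tilde{\xi}^0 n^\mu$ vanishes and its tangential part is $\tilde{\xi}^i+\beta\phi^i=0$, so ${}^4\chi\equiv0$ on the spacelike $2$-surface $\Sigma$.

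With a Killing field vanishing on a spacelike $2$-surface in hand, I would invoke the classical structure theorem (Boyer; see also the bifurcate Killing horizon literature): provided $\nabla_\mu{}^4\chi_\nu\neq0$ on $\Sigma$, the null geodesics issuing orthogonally from $\Sigma$ generate two null hypersurfaces to which ${}^4\chi$ is null and tangent, and $\Sigma$ is their bifurcation surface. The constants are then read off as in the final clause of Theorem \ref{main2}: comparing the relation ${}^4\tilde{\xi}+\beta\,{}^4\phi\equiv0$ on $\Sigma$ with the defining relation for a horizon Killing field recalled earlier identifies the angular velocity as $\Omega=\beta$; and since $\phi^0=0$ forces $\chi^0=\tilde{\xi}^0$, the surface gravity is $\kappa=\tfrac12 n^i\nabla_i(\chi^0)=\tfrac12 n^i\nabla_i(\tilde{\xi}^0)=8\pi\alpha$, the last equality following from the boundary condition $\partial_i(\xiref^0)\tilde{n}^i=16\pi\alpha$ together with the fact that $\xi-\xiref$ and $\lambda$ each have vanishing value and normal derivative on $\Sigma$.

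I expect the main obstacle to be the rigorous application of the bifurcate-horizon theorem at the regularity available here: the development is only an $H^2$-type spacetime, whereas the classical Boyer argument is phrased for smooth metrics and relies on the behaviour of the orthogonal null congruences, so this step must be justified carefully, and one must assume the non-degeneracy $\alpha\neq0$ (an extremal surface would not produce the bifurcate structure). A secondary point to make precise is that axial symmetry genuinely propagates to the maximal development, so that ${}^4\phi$ exists as a spacetime Killing field; this is the standard Killing initial data correspondence, but it is exactly what upgrades the merely tangential vector $\phi$ of Theorem \ref{main2} to a bona fide symmetry of $\fm$, and hence what makes ${}^4\chi$ a Killing field rather than just a vector field that happens to vanish on $\Sigma$.
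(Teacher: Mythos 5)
Your proposal is correct and follows essentially the same route as the paper: the paper likewise forms the Killing field $\tilde{\xi}+\beta\phi$, observes (via the remark that $\lambda$ vanishes on $\mathcal{M}\setminus\mathcal{M}_0$, hence $\tilde{\xi}=\xi_{\operatorname{ref}}=-\beta\phi$ on $\Sigma$) that it vanishes on the spacelike $2$-surface $\Sigma$, and then invokes the classical bifurcate-horizon result (Chapter 5 of Wald's QFT book). Your write-up simply makes explicit the steps the paper leaves implicit --- the Moncrief/Killing-initial-data passage to the spacetime development, the identification of $\kappa=8\pi\alpha$ from the boundary condition on $\partial_n\xi^0_{\operatorname{ref}}$, and the regularity/non-degeneracy caveats --- which is consistent with the paper's own admission in the abstract that this conclusion is offered as ``evidence'' rather than a fully detailed proof.
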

\begin{proof}
This is an immediate consequence of the fact that if a Killing field vanishes on a spacelike $2$-surface then that surface is the bifurcation surface of a bifurcate Killing horizon (see, for example, Chapter 5 of \cite{waldQFT}).
\end{proof}
\begin{remark}
By virtue of the fact that $D\Phi^*(\xi)=0$ for a Killing vector, $\xi$, we do indeed have $D\Phi^*_\pi(\phi),D\Phi^*_\varepsilon(\phi)\in W^{1,2}_{-1/2\, c}(\mathcal{M}_0)$ when $\phi$ is the axial Killing vector.
\end{remark}

\section{Acknowledgements}
This article consists of research conducted during my candidature as a PhD student at Monash University. I would like to express my gratitude for the support provided by Monash University and my supervisors, Robert Bartnik and Todd Oliynyk, during this time.

\bibliographystyle{plain}
\bibliography{../../refs}
\end{document}